\newtheorem{theorem}{Theorem}
\newtheorem{claim}{Claim}
\newcommand{\blackslug}{\penalty 1000\hbox{
    \vrule height 8pt width .4pt\hskip -.4pt
    \vbox{\hrule width 8pt height .4pt\vskip -.4pt
          \vskip 8pt
      \vskip -.4pt\hrule width 8pt height .4pt}
    \hskip -3.9pt
    \vrule height 8pt width .4pt}}
\newenvironment{proof}{$\;$\newline \noindent {\sc Proof.}$\;\;\;$\rm}{\qed}
\newcommand{\qed}{\hspace*{\fill}\blackslug}
\def\boxit#1{\vbox{\hrule\hbox{\vrule\kern4pt
  \vbox{\kern1pt#1\kern1pt}
\kern2pt\vrule}\hrule}}
\begin{document}
\title{How Credible is the Prediction of a Party-Based Election?}
\author{Jiong Guo \and Yash Raj Shrestha \and Yongjie Yang\thanks{Corresponding author (yyongjie@mmci.uni-saarland.de)}\\
Universit\"{a}t des Saarlandes \\
Campus E 1.7, D-66123 Saarbr\"{u}cken, Gemarny}

\nocopyright
\maketitle

\begin{abstract}
\begin{quote}
In a party-based election system, the voters are grouped into parties and all voters of a party are assumed to vote according to the party preferences over the candidates. Hence, once the party preferences are declared the outcome of the election can be determined. However, in the actual election, the members of some ``instable" parties often leave their own party to join other parties. We introduce two parameters to measure the credibility of the prediction based on party preferences: {\sc{Min}} is the minimum number of voters leaving the instable parties such that the prediction is no longer true, while {\sc{Max}} is the maximum number of voters leaving the instable parties such that the prediction remains valid. Concerning the complexity of computing {\sc{Min}} and {\sc{Max}}, we consider both positional scoring rules (Plurality, Veto, $r$-Approval and Borda) and Condorcet-consistent rules (Copeland and Maximin). We show that for all considered scoring rules, {\sc{Min}} is polynomial-time computable, while it is NP-hard to compute {\sc{Min}} for Copeland and Maximin. With the only exception of Borda, {\sc{Max}} can be computed in polynomial time for other scoring rules. We have NP-hardness results for the computation of {\sc{Max}} under Borda, Maximin and Copeland.
\end{quote}
\end{abstract}

\section{Introduction}
Voting has been recognized as a common approach for preference aggregation and collective decision making whenever there exists more than one
alternative for a community to choose from. Based on the conflicting preferences over the alternatives of different voters, some voting rules are designed
in an effort to reach the best possible joint decision. Since long, voting has been a part and parcel of the fields of preference handling, decision making
and social choice. It comes with a  wide variety of applications which ranges from
multi-agent systems, political elections, recommendation systems, etc. \cite{DBLP:journals/cj/PittKSA06,DBLP:conf/hci/Popescu13b}.

By the celebrated Gibbard-Satterthwaite theorem
\cite{Gibbard73,Satterthwaite75} and other results expanding its score (see, e.g, the work by Duggan and Schwarthz \cite{bbb})
all reasonable voting systems are manipulable in principle, as long as they are not  restricted to extremely
special cases, e.g., the single-peaked restriction. This gives rise to the possibility for voters to misreport their preferences in order
to get better off. Motivated by this fact,
Bartholdi, Tovey and Trick \cite{BARTHOLDI89} initiated the study of this issue from the complexity theoretic aspect
with the adoption of computational complexity shields as the natural barrier to prevent instability of voting.
Their seminal work paved the path for huge amount of research work exploring the complexity of various strategic behaviors (e.g., manipulation, control, bribery) in voting systems
which has been
extensively studied in the past two decades \cite{DBLP:conf/aaai/ParkesX12,DBLP:conf/atal/FaliszewskiHH13,DBLP:conf/sofsem/ChevaleyreELM07}.

In this paper, we study the voting systems, where voters can be grouped into parties (or interest groups) and the party members
are required to follow party discipline, that is, the voters of the same party should all vote according to the party preference. In this setting, the outcome of the voting can be easily predicted, once the preferences of all parties are declared. Such voting scenarios can be found in various real-world applications, for example, parliament voting and board elections of universities.

However, in practice, the final results of such elections are often much different from the predictions based on party preferences, mainly caused by the ``instability'' of some participating parties. That is, some members of these `` instable'' parties refuse to follow the preferences of their own parties and join other ``stable'' parties, possibly persuaded by the stable parties. Thus, it could be of great importance for the chairman of the voting to measure the influence of the instable parties to the predictability of the voting. Hereby, consider the following two parameters: {\sc min} represents the minimum number of voters from the instable parties, who can change the outcome of the voting by joining other stable parties, and {\sc  max} represents the maximum number of voters from the instable parties, whose revoting will not affect the outcome. Based on these definitions the prediction of such a party-based voting with high values of both {\sc min} and {\sc max} can be considered as credible. These two parameters could be also critical for party leaders to design their strategy for manipulating the outcome of the election. For example,  for a party fearing an unfavorable prediction, the parameter {\sc min} indicates the minimum ``budget'' that the party needs to invest, that is, to persuade how many voters from instable parties, while
the minimum goal for the parties favoring the prediction is to have $|\mathcal{V}|$-{\sc max} voters obeying their own party preferences, where $\mathcal{V}$ is the set of all voters.

The main task of this work is to explore the computational complexity of computing {\sc min} and {\sc max} for various voting rules. We show that {\sc min} is polynomial-time computable for all common positional scoring rules and for Condorcet, but NP-hard for Maximin and Copeland. Moreover, the computation of Max can be done in polynomial-time for $r$-Approval, Plurality and Veto, but is NP-hard for Borda, Condorcet, Maximin and Copeland. To this end, we mainly study a variation of the above mentioned party-based election, where 
there is only one stable party, that is, the members of the instable parties can only join this stable party. This variation could be of particular interest for one participating party to determine how hard it is to manipulate or defend the outcome of the voting by persuading members of other parties to join it.

\section{Related Works}
Perek et al. \cite{DBLP:conf/atal/PerekFPR13} also considered the party-based elections, where there is a ``leading'' party with a favorable prediction. The main goal is to calculate how safe is the leading party with respect to losing its members to other parties. Hereby, Perek et al. also compute two parameters, the minimum number of members to lose to change the outcome ({\sc pes}) and the maximum number of members to lose without changing the outcome ({\sc opt}). The model by Perek et al. shares certain similarities with ours, distinguishing stable and instable parties and voters switching from instable to stable parties. Thus, our work without restriction on the number of ``instable'' parties can be considered as complementing and extending the one by Perek et al. \cite{DBLP:conf/atal/PerekFPR13}, where there is only one fixed instable party. The difference between our problems and the ones of Perek et al. \cite{DBLP:conf/atal/PerekFPR13} consists mainly in the parties whose voters may switch. Perek et al. \cite{DBLP:conf/atal/PerekFPR13} fixed one party as the winning party and all the switching of voters takes place from this party to other parties. However, in our problems the voters from any party can switch to any other party without any such restrictions. Besides this, the motivation of our work is significantly different from that of theirs. Our model puts strong emphasis on the stability of the election as a whole, and the instability could make some parties better off or worse off. This issue is within the compass of the voting rule designer. However, their work measures the stability related to a fixed party leader (the leader of the fixed party which is assumed to be known before performing the strategic behavior). Moreover, although the complexity results of our problems seem very similar to the ones achieved by Perek et al. \cite{DBLP:conf/atal/PerekFPR13}, the complexity results of one model cannot be inferred from that of the other. From the technical point of view, our reductions are based on completely different reduction strategies compared to Perek et al. \cite{DBLP:conf/atal/PerekFPR13}.

Our study has clear connection to the bribery problem \cite{DBLP:journals/jair/FaliszewskiHH09}, where voters may be bribed to change their votes in any possible way to influence the voting outcome. In contrast, we consider party-based elections, where voters can only switch from party to party and follow the party preferences, which could be more realistic in many settings.

\section{Preliminaries}
In this section, we introduce some basic notions which we use throughout this paper. More detailed definitions and results about voting theory can be found in \cite{BetzlerBCN12}. An
{\it election} is a pair $\mathcal{E=(C,V)}$, where $\mathcal{C}=\{c_1, \dots, c_m\}$ is a set of {\it candidates} and $\mathcal{V}=\{v_1, \dots, v_n \}$ is a set of {\it voters}. Each voter casts a preference over $\mathcal{C}$.
A {\it preference} is a linear order that ranks the candidates from the most preferred one to the least preferred one. For example, if $\mathcal{C}=\{a,b,c\}$ and some voter likes
{\it a} best, then {\it b}, and then {\it c}, then his or her preference is represented as $a\succ b\succ c$.
For two distinct candidates $c$ and $d$, we define $N_{\mathcal{E}}(c,d)$ as the number of voters in $\mathcal{E}$ who prefer $c$ to $d$.
We omit the index $\mathcal{E}$ if it is clear from the context. We say a candidate $c$ beats (resp. ties) another candidate $c'$ if $N(c,c')>N(c',c)$ (resp. $N(c,c')=N(c',c)$).

A {\it voting rule} is a function $R$ that given an election $\mathcal{E=(C,V)}$ returns a subset $R(\mathcal{E}) \subseteq \mathcal{C}$ of the candidates
that are said to win the election. 

In this paper, we consider the following voting rules.
An $m$-candidate {\it positional scoring rule} is defined through a non-increasing vector $\alpha= (\alpha_1, \dots, \alpha_m )$ of non-negative integers. A candidate $c \in \mathcal{C}$ is assigned $\alpha_i$
points from each voter that ranks $c$ in the $i^{th}$ position of his
preference. The score of a candidate is the sum of points he gets from all voters. The candidate(s) with the maximum score are the winner(s).
Many election rules can be considered as positional scoring rules. We study the following scoring rules (for $m$ candidates)  in this paper:
Plurality (scoring vector $(1,0,\dots, 0)$), Veto (scoring vector $(1,1,\dots, 1, 0)$), $r$-Approval (scoring vector  with $r$ ones followed by $m-r$ zeroes,
and Borda (scoring vector $(m-1, m-2, \dots, 0)$). 

A {\it Condorcet-consistent rule} always elects the Condorcet winner, if it exists. The Condorcet winner is the candidate
who beats all other candidates. Examples of Condorcet-consistent rules, that will be considered in this paper, are
Maximin and Copeland. For a candidate $c$ in an election, let $B(c)$ be the set of candidates which are beat by $c$ and let $T(c)$ be the set of candidates which tie with $c$. Then, the Copeland$^{\alpha}$ score of $c$ is $|B(c)|+\alpha\cdot |T(c)|$, for $0\leq \alpha \leq 1$. A candidate is a Copeland$^{\alpha}$ winner if it has the highest score. On the other hand, the maximin score of a candidate $c$ is given by min$_{d\in \mathcal{C}\setminus \{c\}}N_{\mathcal{E}}(c,d)$, and the winner in a maximin election is a candidate with the highest score. 


We consider the election scenario where voters are grouped into {\it parties}. That is, in addition to the set of voters $\mathcal{V}=(v_1, \dots, v_n)$
we have a partition $\mathcal{P}=(P_1, \dots, P_l)$ of voters, where
$P_1, \dots, P_l$ are called parties.
We assume that  all voters in the same party vote in the same way, called the preference of the party.
An {\it election with parties} is therefore represented as a triple $\mathcal{E=(C, V, \mathcal{P})}$.
When we say that a voter {\it switches} from its original party to another party, we mean that the respective voter casts his vote according to the preference of the destination party.
We mainly study the variation, where there is only one stable party, called one destination model. Here, the members of the only stable party cannot switch to other parties. The two problems considered in this paper are defined as follows. Note that both problems have a distinguished candidate $p$, who is the winner of the election, if all voters follow their party preference.

\begin{description}\itemsep0pt
 \item {\sc One-Destination-Min}
  \item {\bfseries Input:}  An election with parties $\mathcal{E=\{C,V,P\}}$, a positive integer $k$, a distinguished candidate $p$ and a specific voting rule.
  \item {\bfseries Question:} Is there one party $P$ such that another candidate $p'\neq p$ becomes the winner after at most $k$ voters not in $P$ switching to $P$?
 \end{description}

 \begin{description}\itemsep0pt
 \item {\sc One-Destination-Max}
  \item {\bfseries Input:}  An election with parties $\mathcal{E=\{C,V,P\}}$, a positive integer $k$, a distinguished candidate $p$ and a specific voting rule
  \item {\bfseries Question:} Is there one party $P$ such that $p$ remains the winner after at least $k$ voters not in $P$ switching to $P$?
 \end{description}

Similarly, we can define {\sc Multiple-Destination-Min} and {\sc Multiple-Destination-Max} where the number of the destination parties is not restricted.

Our hardness proofs are reduced from  the following problems:

 \begin{description}\itemsep0pt
 \item {\sc Exact Three Set Cover} (X3C)
  \item {\bfseries Input:}  A set ${X}=\{x_1, \dots, x_{m}\}$, and a collection $\mathcal{S} = \{S_1, \dots, S_n\}$ of 3-element subsets of $X$.
  \item {\bfseries Question:} Does $\mathcal{S}$ have an exact cover $S$ for $X$, i.e., a subcollection $S \subseteq \mathcal{S}$
                             such that every element of $X$ occurs in exactly one subset of $S$ ?
 \end{description}
Throughout this paper, we assume that each element $x_i$ occurs in exactly three subsets of $\mathcal{S}$. This assumption does not
change the NP-hardness of X3C \cite{Gonzalez85}.

An independent set of a graph is a subset of vertices where no edge exists between any  pair of vertices in this subset.
A vertex cover of a graph is a subset of vertices whose removal results in an independent set.

\begin{description} \itemsep0pt
\item {\sc{Independent Set}} {(\sc{IS})}
\item {\bfseries Input:} A graph $G$ and an integer $t\geq 0$.
\item {\bfseries Question:} Is there an independent set of $G$ of size at least~$t$?
\end{description}

\begin{description}\itemsep0pt
\item {{\sc Vertex Cover ({\sc VC})}}
\item {\bf{Input:}}  A graph $G$ and an integer $t\geq 0$.
\item {\bf{Question:}} Is there a vertex cover of $G$ of size at most $t$?
\end{description}

X3C, {\sc Vertex Cover} and {\sc Independent Set} are known to be NP-complete \cite{GJ79}.

\begin{table}\begin{center}
\begin{tabular}{|l|c|c|}
\hline

Election & \multicolumn{2}{c|}{The complexity of computing}\\ \cline{2-3}
systems & {\ \ \ \ \ \ \sc Min\ \ \ \ \ \ \ } & {\sc Max} \\ \hline

Plurality & P & P \\ \hline

Veto & P & P \\ \hline


$r$-Approval & P & P \\ \hline



Borda & P & NP-h \\ \hline

Condorcet & P & NP-h \\ \hline

Maximin & NP-h & NP-h \\ \hline

Copeland$^{\alpha}$ & NP-h & NP-h \\ \hline
\end{tabular}
\caption{Summary of Our Results}

\label{tab:summary}
\end{center}\end{table}

{\bf{Remarks:}} Our results hold for both unique-winner and nonunique-winner models, and also for both one-destination and multiple-destination cases. Here, for the sake of simplicity, all our proofs are for the one-destination with the unique-winner model. Other combinations can be shown by slightly modifying the proofs given here. 

\section{The Complexity of Computing {\sc MIN}}
In this section, we study the {\sc One-Destination-Min} problems. In particular, we prove that these problems
 are polynomial-time solvable under all positional scoring rules. As for the Condorcet-consistent rules, we prove that the Condorcet
rule behaves in the same way as the positional scoring rules, whereas both the Maximin voting and the Copeland${^{\alpha}}$ voting for all $0\leq \alpha\leq 1$
 lead  to NP-hardness in the {\sc One-Destination-Min} problem. Our main results are summarized in the following theorems.

\begin{theorem}
{\sc One-Destination-Min} for all positional scoring rules is polynomial-time solvable. 
\end{theorem}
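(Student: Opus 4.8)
The plan is to exhaust over a polynomial amount of ``guessed'' structure and reduce what remains to polynomial-time optimization. A witnessing instance is fixed by the destination party $P$ (one of $\ell$) and the candidate $p'\ne p$ that ends up the unique winner (one of $m-1$), so I would loop over all pairs $(P,p')$ and, for each, compute the minimum number of voters from parties other than $P$ that must switch to $P$ to make $p'$ the unique winner, answering \textsc{yes} iff some pair gives a value at most $k$. (In the non-unique-winner variant the inner task collapses: it suffices that $\mathrm{sc}(c)\ge\mathrm{sc}(p)$ for some $c\ne p$ — a single inequality per target candidate — solved outright by greedily pulling switched voters, in decreasing order of how strongly their party favours $p$ over $c$, until $c$ catches up, respecting the party sizes. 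The unique-winner case needs the more careful argument below.)

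For a fixed $(P,p')$ I would cast the inner task as an integer program over per-party switch counts. Writing $i_Q(c)$ for the position of $c$ in party $Q$'s ballot, moving $t_j$ voters from $P_j$ to $P$ shifts the score of every $c$ by exactly $t_j(\alpha_{i_P(c)}-\alpha_{i_j(c)})$. So I must pick integers $0\le t_j\le|P_j|$ for each $P_j\ne P$, minimizing $\sum_j t_j$ subject to the $m-1$ constraints $\mathrm{sc}_0(p')+\sum_j t_j(\alpha_{i_P(p')}-\alpha_{i_j(p')})>\mathrm{sc}_0(c)+\sum_j t_j(\alpha_{i_P(c)}-\alpha_{i_j(c)})$, $c\ne p'$. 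For Plurality, Veto and $r$-Approval the coefficients are in $\{0,\pm1\}$; after classifying parties by $p'$'s position in their ballot the constraints become monotone and are met by greedily taking switched voters from the parties that most hurt the strongest current rivals of $p'$, within the capacities.

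The delicate rule is Borda, where $\alpha_{i_P(c)}-\alpha_{i_j(c)}$ can reach $m-1$ and the program is a genuine integer program with many variables and constraints; turning this into a polynomial-time computation is where I expect the real work to lie. My route: additionally guess $z=\sum_j t_j$ (at most $n$ values) and $p'$'s final score $s$ (at most $n(m-1)+1$ values, polynomial since all scoring entries are at most $m-1$). With $P,z,s$ fixed, $p'$'s score is pinned down, the boost $z\alpha_{i_P(c)}$ each candidate gets is a constant, and every rival's admissible total of points from the switched voters becomes a plain upper threshold; it remains to decide whether $z$ switched voters can be distributed among the parties, within capacities, so that $p'$ collects few enough points while each rival stays below its threshold. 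The key structural fact is that every party's ballot distributes the same multiset $\{\alpha_1,\dots,\alpha_m\}$ of points, only to different candidates, which I would use to phrase and solve this feasibility test as a network-flow / bipartite $b$-matching problem; proving this model correct, and that no fractional distribution beats the best integral one, is the heart of the proof. Looping over the polynomially many guesses and all $(P,p')$ then gives the claimed algorithm.
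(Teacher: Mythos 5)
There is a genuine gap, and it stems from a misreading of the winning condition. Throughout the paper, \textsc{One-Destination-Min} in the unique-winner model asks only that the \emph{prediction fail}, i.e., that $p$ cease to be the unique winner; this is witnessed by \emph{some} candidate $p'\neq p$ reaching a score at least that of $p$, and nothing is required of how $p'$ compares to the other candidates (see, e.g., the Copeland and Maximin reductions, whose correctness arguments conclude with ``$p$ is not the unique winner anymore''). Under that semantics, the remark you relegate to a parenthesis is essentially the entire proof and coincides with the paper's algorithm: guess $p'$, fix the destination party maximizing $s_{\succ}(p')-s_{\succ}(p)$ (or loop over all destinations, as you do), and greedily switch voters from parties in non-increasing order of $s_{\succ}(p)-s_{\succ}(p')$ until $p'$ catches $p$ or $k$ voters are moved. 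Since the per-voter gain in the single quantity $s(p')-s(p)$ is separable into a destination term and a source term, this greedy is optimal for each guessed $p'$, and taking the best guess answers the instance. This works uniformly for Borda; no special treatment is needed.

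Instead, you read the problem as requiring the guessed $p'$ to become the new \emph{unique} winner, which imposes $m-1$ simultaneous constraints and turns the inner task into a multidimensional covering integer program. That is a different and substantially harder problem, and your treatment of it is incomplete precisely where it matters: for Borda, after guessing $z$ and $p'$'s final score you are still left with one lower-bound constraint $\sum_j t_j\,\alpha_{i_j(c)}\geq L_c$ per rival $c$, i.e., you must cover an $(m-1)$-dimensional demand vector by capacitated copies of permutations of the scoring vector. You assert this can be phrased as a network-flow or $b$-matching problem but defer the construction and its integrality, explicitly calling it ``the heart of the proof''; the observation that every ballot distributes the same multiset of points does not obviously linearize these per-candidate covering constraints into a flow, and such permutation-covering problems are in general NP-hard. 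So even granting your stronger reading, the Borda case is not established. The fix is simply to adopt the paper's (intended) weaker success condition, under which your own greedy already suffices.
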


\begin{proof}
We prove the theorem by proposing an algorithm which runs in polynomial time. Without loss of generality, let $\vec{\alpha}=(\alpha_1,\alpha_2,...,\alpha_m)$ be
the scoring vector where $\alpha_1\geq \alpha_2\geq,...,\geq \alpha_m$. In the first step of our algorithm we guess the replacing candidate $p'$ which will have
a score at least that of $p$ after some voters switch their parties.
Clearly, our guess involves candidates only from $\mathcal{C} \setminus \{p\}$ whose number is bounded by $m-1$. Then, for each such
guess $p'$, we need to check if it is possible to make $p'$ have a score at least that of $p$ by switching at most $k$ voters possibly from several
parties to a certain
destination party in $\mathcal{P}$. The best possible way of decreasing
the gap between the scores of $p$ and $p'$ with switching the minimum number of voters is to fix a party, whose preference achieves the maximum value of $s_{\succ}(p')-s_{\succ}(p)$, as the destination. Here, $\succ$ denotes the preference of the party and $s_{\succ}(c)$ denotes the score of the candidate $c$ from $\succ$. 
Now, we sort the party preferences of the remaining parties according to the non-increasing order of $s_{\succ}(p)-s_{\succ}(p')$.
Finally, we switch the voters of the party ordered at the first place, then the one at the second place, and so on to the destination party, until $k$ voters are switched or the score of $p'$ is at least that of $p$. If the latter case applies, we return ``yes''; otherwise, we return ``no''. The correctness and running time of the algorithm are easy to prove.

\end{proof}

\begin{theorem}
{\sc One-Destination-Min} for the Condorcet voting rule is solvable in  polynomial time.
\end{theorem}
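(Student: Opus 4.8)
The plan is to give a direct polynomial-time algorithm, in the spirit of the previous proof. As there, I would first guess the distinguished candidate $p'\neq p$ that is to ``catch up'' with $p$. Since $p$ is the Condorcet winner it currently beats $p'$, and in any modified election $p$ ceases to be the (unique) Condorcet winner as soon as $N(p',p)\ge N(p,p')$, while as long as $p$ beats every other candidate it stays the winner. Hence the instance is a ``yes''-instance if and only if for some $p'\neq p$ one can reach $N(p',p)\ge N(p,p')$ by switching at most $k$ voters to a single destination party, and it suffices to compute, for each $p'\neq p$, the minimum number $\mu(p')$ of switches achieving this and then test whether $\min_{p'}\mu(p')\le k$.

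To compute $\mu(p')$, I would observe that when one voter of party $Q$ switches to party $P$, the value $N(p,p')-N(p',p)$ decreases by $2$ if $p\succ_Q p'$ and $p'\succ_P p$, increases by $2$ if $p'\succ_Q p$ and $p\succ_P p'$, and is unchanged otherwise; in particular a single switch changes this value by at most $2$. Thus the only helpful switches move a voter from a party ranking $p$ above $p'$ into a destination party ranking $p'$ above $p$. If no party ranks $p'$ above $p$, no such destination exists and we set $\mu(p')=\infty$; otherwise fix any party $P$ with $p'\succ_P p$ as the destination. Since $p$ currently beats $p'$ we have $N(p,p')-N(p',p)>0$, and $\mu(p')=\lceil(N(p,p')-N(p',p))/2\rceil$: this many switches from parties ranking $p$ above $p'$ bring the value to $0$ or below, while by the ``at most $2$'' bound none fewer can. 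The needed voters exist and all lie outside $P$, because $N(p,p')=(n+(N(p,p')-N(p',p)))/2\ge\lceil(N(p,p')-N(p',p))/2\rceil$, where $n$ is the number of voters.

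The algorithm then computes all pairwise majority margins, evaluates $\mu(p')$ for every $p'\neq p$, and returns ``yes'' exactly when the smallest value is at most $k$; this clearly runs in polynomial time. I do not expect a genuine obstacle here. The one point that deserves care is the boundary case $\min_{p'}\mu(p')=\infty$: this occurs precisely when every party ranks $p$ at the top, in which case switching any voter to any party (which also ranks $p$ at the top) keeps $p$ above every candidate, so $p$ indeed remains the Condorcet winner and the answer ``no'' is correct.
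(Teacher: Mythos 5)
Your proposal is correct and follows essentially the same route as the paper: guess the candidate $p'$ that will no longer be beaten by $p$, fix any destination party ranking $p'$ above $p$, and move voters who rank $p$ above $p'$ into it. The only difference is cosmetic---you derive the closed-form optimum $\lceil (N(p,p')-N(p',p))/2\rceil$ via the ``each switch changes the margin by at most $2$'' observation, whereas the paper simply switches $k$ such voters and checks; both arguments are sound.
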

\begin{proof}
Again, the algorithm first guesses the candidate $p'$ which beats $p$ in the final election.
Next, it fixes one party, whose party preference   prefers $p'$ to $p$, as the destination party.
Then, it switches arbitrary $k$ voters, which prefer $p$ to $p'$, to the destination party and checks the final winning status of $p$ and $p'$.
For each guessed candidate, the switch of voters and the calculation of scores
can be done in polynomial time, and with at most $m-1$ such guesses we have an overall polynomial-time algorithm.
\end{proof}

\begin{theorem}
{\sc One-Destination-Min} for the Copeland$^{\alpha}$ voting rule is NP-hard, for every $0\leq \alpha\leq 1$.
\end{theorem}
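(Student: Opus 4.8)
The plan is to reduce from \textsc{Exact Three Set Cover} (X3C). Given an instance with $X=\{x_1,\dots,x_m\}$, $m=3\kappa$, and $\mathcal{S}=\{S_1,\dots,S_n\}$, I would build an election with parties together with a distinguished candidate $p$ and a bound $k$ so that $p$ is the unique Copeland$^{\alpha}$ winner when everyone follows the party line, and so that some second candidate $p'$ can be made the winner by switching at most $k$ voters into a single party precisely when $\mathcal{S}$ admits an exact cover. The candidate set would consist of $p$, $p'$, a family of ``element gadget'' candidates indexed by $X$, and a block of auxiliary candidates used only to tune Copeland scores. The parties would be one stable party $P^{\star}$ playing the role of destination, one party $P_j$ for each $S_j\in\mathcal{S}$ (each carrying a small fixed number of voters, ideally one, so that moving voters is all-or-nothing), and some ``filler'' parties whose only purpose is to fix the pairwise majority margins. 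A key design choice is to pick the preferences and multiplicities so that \emph{every} pairwise margin $N(c,d)-N(d,c)$ is odd: since each voter who switches parties changes any given margin by $0$ or $\pm 2$, no pairwise tie is ever created, so $T(c)=\emptyset$ for all $c$ throughout and the Copeland$^{\alpha}$ score of every candidate equals $|B(c)|$ regardless of $\alpha$. Hence it suffices to argue for plain Copeland, and the conclusion then holds verbatim for all $0\le\alpha\le 1$.

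The heart of the construction is to make $P^{\star}$ the only useful destination and to make moving the voter(s) of $P_j$ into $P^{\star}$ responsible for exactly the pairwise flips tied to the three elements of $S_j$: each such flip should cost $p$ a Copeland point or give $p'$ one, so that overtaking $p$ requires flipping the gadgets of all $m$ elements. The budget $k$ would be tuned so that at most $\kappa$ set-parties' worth of voters can be moved; since $\kappa$ triples cover $3\kappa=m$ elements only when they are pairwise disjoint, a successful switch forces an exact cover, giving the hard direction, while the cover sets immediately give the easy direction. For the converse one has to rule out shortcuts: a set-party (or a filler party) used as destination can only push the decisive contests in a direction that keeps $p$ on top; moving a filler voter into $P^{\star}$ is engineered to be neutral for the $p$-versus-$p'$ picture and the element gadgets; any selection of fewer than $\kappa$ set-parties, or one with overlaps, leaves some gadget unflipped and $p$ survives; and no element or auxiliary candidate can climb above $p$. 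Putting these together identifies the successful switches with the exact covers.

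The step I expect to be the genuine obstacle is exactly this simultaneous calibration. The filler votes must be numerous enough to make $p$ the strict initial winner, yet their preferences have to be arranged so that moving filler voters into $P^{\star}$ cannot cheaply shift the $p$-versus-$p'$ outcome -- the natural device is to pair up the auxiliary candidates so that every ``helpful'' shift for $p'$ is accompanied by an equal and opposite one -- while the set-party preferences must differ from that of $P^{\star}$ in precisely the coordinates corresponding to their own elements; on top of this the decisive contests must sit at margin one (so that a single switched voter flips a gadget, and so that a gadget being influenced only by the relevant set-parties does not already pre-decide it) and every pairwise margin must stay odd. Pinning down a gadget that meets all of these constraints at once is the technical crux; once it is fixed, both directions of the reduction and the uniformity over $\alpha$ are routine.
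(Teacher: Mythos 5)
There is a genuine gap: what you have written is a plan for a reduction, not a reduction. You yourself flag the ``simultaneous calibration'' of the gadget as the technical crux and leave it open, but that calibration \emph{is} the proof. Without concrete party preferences one cannot check any of the claims your argument depends on: that $p$ is initially the unique winner, that the challenger $p'$ sits exactly $m$ Copeland points below $p$ and gains exactly one point per flipped element gadget, that each contest $p'$ vs.\ $x_i$ has margin exactly one against $p'$ (note each element lies in exactly three set-parties, so this margin must be tuned by filler voters whose own switches must simultaneously be harmless), that no auxiliary or element candidate can overtake $p$, and that no party other than $P^{\star}$ can serve as a useful destination. Each of these is a quantitative statement about a profile you have not exhibited, and experience with such constructions (including the ones in this paper) shows they fail for naive choices; the backward direction in particular requires a case analysis over all possible destination parties that cannot even be begun without the profile. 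Your one fully worked-out idea --- keeping every pairwise margin odd so that no ties arise and the result is uniform in $\alpha$ --- is correct and is essentially how the paper also handles $\alpha$, but it does not substitute for the construction.

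For comparison, the paper reduces from \textsc{Vertex Cover} rather than X3C: it introduces one candidate per edge, a challenger $b_1$ that trails every edge candidate by margin exactly one, and dummy candidates padding the scores so that $s(p)=|E|+4$ while $b_1$'s score is capped at $|E|+4$ and is attainable only by beating \emph{every} edge candidate; switching the voter of a vertex-party $P_v$ into the destination $P$ flips $b_1$ vs.\ $e$ precisely for the edges $e$ incident to $v$, so success with budget $t$ is equivalent to a vertex cover of size $t$. This is structurally close to your plan (your X3C variant is in fact closer to the paper's Maximin reduction), so the strategy is not wrong --- but the statement is not proved until the gadget is written down and the margin and case analyses are carried out.
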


\begin{proof}
We reduce an instance $G=(V,E)$ of {\sc Vertex Cover} with $|V|=n$ to an instance $\mathcal{E=(C,V, P)}$ of {\sc One-Destination-Min}.
Clearly, {\sc Vertex Cover} remains NP-hard with $t < \frac{n-5}{2}$.
Without loss of generality, assume $n$ being even.
We further assume that  there exist two vertices $v'$ and $v''$
in $V$, which do not belong to some solution set (e.g., both $v$ and $v'$ have degree-1).
Both assumptions do not change the NP-hardness of {\sc{Vertex Cover}}.

For each edge $e_i \in E$, we create a corresponding candidate in $\mathcal{C}$. With slight abuse of terminology, we use the same notation
to denote the candidate as its corresponding edge in $E$. Let $E(v)$ be the set of candidates corresponding to the edges containing the vertex $v$
and $E^*$ be the set of all candidates corresponding to the edges in $E$.
 In addition, we have six candidates $ p, A=\{a_1,a_2\}$ and $B=\{b_1,b_2,b_3\}$.
In all the following preferences, we assume an order $(a_1 \succ a_2)$ for $A$ and an order $(b_1 \succ b_2 \succ b_3)$ for $B$.
In the following preferences, the elements of some subset $E' \subseteq E^*$ are ordered consecutively. Hereby, we use $\dots \succ E' \succ \dots$ to denote the suborder formed by $E'$ and the elements in $E'$ are assumed to be ordered in this suborder according to their indices. We first create the following two preferences:

$E(v') \succ A \succ p \succ B \succ E^* \setminus E(v')$ and

$E(v'') \succ A \succ p \succ B \succ E^* \setminus E(v'')$.

In addition, for every other vertex $v\in V\setminus \{v',v''\}$, we create a preference defined as follows:

$E(v)\succ p\succ B\succ A\succ E^* \setminus E(v)$

Each of the above preferences represents a party. We denote by $P_v$ the party corresponding to the vertex $v$. Furthermore, we have a party $P$ containing one voter with the following preference:

$B \succ E^* \succ p \succ A$.

Finally, we have $n-2$ voters, out of which the first $\frac{n-2}{2}$ voters  form a party denoted by $P_1$ with the preference:

$a_1\succ p \succ a_2 \succ E^* \succ B$.

The other $\frac{n-2}{2}$ voters form a party denoted by $P_2$ with the following preference:

$  E^* \succ B \succ a_1\succ p \succ a_2$.

Finally, set $k=t$. Before discussing the correctness, consider the score of each candidate first.
%
%
For a candidate $c$, let $s(c)$ be the Copeland${^{\alpha}}$ score of $c$. Then we have 

$s(p)=|E|+4$,

$s(a_1)=|E|+2$, 

$s(a_2)=|E|$,

$s(b_i)=5-i$, where $i=1,2,3$, and

$s(e_i)=|E|-i+3$.

It is clear that $p$ is the current winner. Some useful observations are as follows:

\begin{claim}\label{claimscoreproperty} The following claims hold:

(1) $s(p)$ cannot be decreased by switching at most $k$ voters.

(2) $s(a_i)$, $s(e_i)$ cannot be increased by switching at most $ k$ voters.

(3) Switching of at most $k$ voters can increase $s(b_1)$ to at most $|E|+4$.
\end{claim}

\begin{proof}
(1) Since for every $e_i\in E^*$ we have $N(p,e_i)-N(e_i,p)=n-5$, and for every $b_i\in B$, we have $N(p,b_i)-N(b_i,p)=n-1$, with the assumption that $t<\frac{n-5}{2}$,
switching arbitrary $k=t$ voters can never decrease the score of $p$.

(2) Similar to (1).

(3) Observe that $N(e_i,b_1)-N(b_1,e_i)=1$ for every $e_i\in E^*$. Therefore, $b_1$ has  the
potential to beat every $e_i$: just switch one voter of a party with $e_i \succ b_1$ to a party with $b_1 \succ e_i$.
Therefore, $b_1$ has the potential to have a score of $|E|+4$.
However, since $N(p,b_1)-N(b_1,p)=n-1$, $b_1$ has no chance to increase its score furthermore. \end{proof}.

Now we prove that $G$ has a vertex cover of size at most $t$, if and only if {\sc One-Destination-Min} on $\mathcal{E=\{C,V,P\}}$ has a ``yes'' answer.

$(\Rightarrow:)$ Let $G$ have a vertex cover $C$ of size $t$. Consider the election after all voters
corresponding to $C$ switch to the party $P$. Since $C$ is a vertex cover, for every edge $e_i$, there is at least
one vertex $v \in C$ with $e_i \in E(v)$. Therefore, for every edge $e_i$, at least one voter of a party with preference  $e_i \succ b_1$ is switched to the
party $P$, where $b_1 \succ e_i$. Due to the analysis of the third claim in Claim \ref{claimscoreproperty}, $b_1$ beats every $e_i$ and thus $p$ is not the unique winner anymore.

$(\Leftarrow:)$ Assume that  {\sc One-Destination-Min} on $\mathcal{E=\{C,V,P\}}$ has a ``yes'' answer and $C$ is the set of voters which switch to the destination party.
Due to Claim \ref{claimscoreproperty}, the only candidate which could have a score at least that of $p$ is $b_1$. This can only  happen if $b_1$
beats every $e_i \in E^*$. Based on this claim, we observe that the parties $P_1$ and $P_2$ cannot be the destination party.
Among the remaining parties, it is obvious that $P$ is the best possible destination party since $b_1$ beats every $e_i$
in this party (in other words, if there is a solution in which the destination party is not $P$, we can always construct another
solution with $P$ being the destination party). Now  consider which parties could be the instable parties. We claim the following.

\begin{claim}
The voters of $P_1$ and $P_2$ cannot be switched.
\end{claim}

To verify the above claim, observe that switching one arbitrary voter from $P_1\cup P_2$ to $P$ would make $b_1$
reach its highest possible score $|E|+4$. However, this also increases the score of $p$ by one (from beating $a_1$);
thus $p$ remains the winner.

Now we show that the vertices corresponding to the voters in $C$ must be a vertex cover in $G$.
Due to the above analysis, $b_1$ has a score at least that of $p$, only if $b_1$ beats every $e_i \in E^*$.
Therefore, for every edge $e_i$, there must be at least one voter in $C$
corresponding to a vertex $v$ with $e_i \in E(v)$,
implying the vertices corresponding to $C$ form a vertex cover of $G$.
\end{proof}

\begin{theorem}
{\sc One-Destination-Min} for Maximin is NP-hard.
\end{theorem}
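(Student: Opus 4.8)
The plan is to prove NP-hardness by a polynomial-time reduction from {\sc X3C}. Given a ground set $X=\{x_1,\dots,x_m\}$ with $m=3\kappa$ and a collection $\mathcal{S}=\{S_1,\dots,S_n\}$ of $3$-element subsets in which each $x_i$ lies in exactly three of the $S_j$, I would build an election containing an \emph{element candidate} for each $x_i$, the distinguished candidate $p$, a designated \emph{challenger} $p'$, and a few auxiliary candidates used only to calibrate scores. The parties are: a \emph{target party} $P^*$; for every $S_j$ a \emph{set party} $P_{S_j}$ consisting of a single voter; one large \emph{blocking party} $Q$; and several \emph{padding parties} of prescribed large sizes. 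The preferences are chosen so that (i) $P^*$ ranks $p'$ above all element candidates, (ii) the set parties are the only parties that rank any element candidate above $p'$ --- and $P_{S_j}$ ranks exactly the three elements of $S_j$ above $p'$ --- and (iii) moving the voter of $P_{S_j}$ into $P^*$ raises $N(p',x_i)-N(x_i,p')$ by exactly $2$ for each $x_i\in S_j$ while leaving untouched every majority contest relevant to $p$'s score, to the score of any auxiliary candidate, and to $p'$ against candidates other than the three elements of $S_j$. Using $Q$ and the padding parties one fixes the base margins so that $p$ is the unique Maximin winner, its score being pinned down by its (uniform) margin against the element candidates, while $p'$'s score is pinned down by its slightly smaller (uniform) margin against the element candidates, and every remaining margin --- in particular $N(p,p')$ and all margins of the auxiliary candidates --- sits so far from the threshold that no choice of $\le k$ switches can bring it into play. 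Finally, set $k=\kappa$.

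The forward direction is then straightforward: from an exact cover $\{S_{j_1},\dots,S_{j_\kappa}\}$, move the $\kappa$ corresponding voters into $P^*$; since the cover is exact, each element candidate is repaired exactly once, so afterwards $p'$ ties $p$'s margin against every element candidate and hence matches $p$'s Maximin score, so (exactly as in the Copeland$^{\alpha}$ proof, where a tie already suffices) $p$ is no longer the unique winner. With a marginally more careful calibration --- arranging that these same switches also slightly depress $N(p,p')$ --- $p'$ strictly overtakes $p$, covering the unique-winner model as well.

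For the reverse direction I would argue, by observations in the spirit of the Copeland$^{\alpha}$ proof, that a ``yes'' answer forces an exact cover. The points are: (a) no candidate other than $p'$ can be pushed to a winning Maximin score by $\le k$ switches, because every auxiliary candidate and every element candidate retains a margin far too negative to recover, and $p$'s critical margins cannot be usefully depressed --- the blocking party $Q$ is engineered precisely so that the one margin $N(p,p')$ which is not enormous still cannot be attacked through any admissible switch; (b) the only move that increases $N(p',x_i)-N(x_i,p')$ at all is to transfer a voter whose home party ranks $x_i$ above $p'$, and the set parties containing $x_i$ are the only such parties, so the transferred voters must come from a subcollection of $\mathcal{S}$ that covers $X$; and (c) since $3\kappa=m$, any covering subcollection of size at most $\kappa$ is necessarily exact. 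Hence the {\sc X3C} instance is a ``yes'' instance, and we obtain the desired equivalence.

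The main obstacle is the margin bookkeeping. Each transferred voter is simultaneously deleted from its home party and inserted into the destination, so a single switch perturbs many head-to-head contests at once, whereas a candidate's Maximin score depends only on its single worst contest. The construction must therefore keep all of the ``inert'' margins --- those of $p$, those of $p'$ against non-element candidates, and those of the auxiliary candidates --- provably on the correct side of the threshold under every admissible set of at most $k$ switches and every choice of destination party, while letting exactly the margins between $p'$ and the element candidates slide in step with the chosen sets, and while still making $p$ the current unique winner. Exhibiting a concrete assignment of base margins, party sizes, and preference orders (including the precise role of the blocking party $Q$) that meets all these constraints simultaneously --- and, for the unique-winner model, makes $p'$ strictly rather than merely weakly overtake $p$ --- is the delicate part; the auxiliary candidates, the blocking party, and the padding parties supply enough degrees of freedom, and the constraints become simultaneously satisfiable once $\kappa$ exceeds a small constant, which may be assumed.
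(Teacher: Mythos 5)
Your plan is, in outline, exactly the paper's reduction: the paper also reduces from X3C, uses one element candidate per $x_i$, a single challenger (called $z$ there) whose Maximin score is its uniform margin of $n$ against the element candidates, one single-voter ``set party'' per $S_t$ ranking exactly the three elements of $S_t$ above $z$, a one-voter destination party $P$ ranking $z$ above all elements, and padding parties that pin every other margin; switching a set-party voter into $P$ lifts $N(z,x_i)$ by one (equivalently, the margin by two) for each $x_i\in S_t$, and an exact cover lifts $z$'s minimum from $n$ to $p$'s score of $n+1$. So the approach is right. The genuine gap is that you never actually produce the election: you explicitly defer ``the delicate part'' --- a concrete assignment of auxiliary candidates, base margins, party sizes and orders under which all inert margins stay on the correct side for \emph{every} choice of destination party and \emph{every} set of at most $k$ switches --- and close with an assertion that enough degrees of freedom exist. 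In a hardness proof of this form that calibration is the proof; without it neither direction of the equivalence is verifiable, and your reverse-direction claims (a)--(c) are properties of a construction you have not specified.

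For comparison, the paper achieves the calibration very economically: only two auxiliary candidates $\alpha,\beta$ are needed. It arranges $s(p)=n+1$ with $p$'s minimum attained against $\alpha$ (a contest no admissible switch can touch), $s(z)=n$ attained against every $x_i$ simultaneously, and $s(\alpha),s(\beta)\approx(n+m/3)/2$, far below threshold; the padding consists of $n-m/3$ voters split into an ``$\alpha$-first'' and a ``$\beta$-first'' party plus one such pair per block $B_i=\{x_{3i-2},x_{3i-1},x_{3i}\}$, which also forces every $x_i$'s own score down to a constant. The reverse direction then follows from three short observations you only gesture at: $z$ is the sole candidate that can reach score $n+1$, hence $P$ is forced as the destination, hence only set parties can be instable, hence the switched voters encode a cover, which by $|S'|\le m/3$ must be exact. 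Note also that your extra effort to make $p'$ \emph{strictly} overtake $p$ is unnecessary under the paper's conventions: destroying unique winnership by a tie already yields a ``yes'' instance. To complete your proof you must instantiate the construction and verify the margin table; as it stands the argument is a plausible blueprint, not a proof.
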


Due to space limitation, the proof is deferred to the Appendix.

\section{The Complexity of Computing MAX}

In this section, we prove the polynomial-time solvability of {\sc One-Destination-Max} for Plurality, $r$-Approval and Veto rules and present NP-hardness results of the
same problem for Borda, Condorcet, Maximin and Copeland rules.

%

\begin{theorem}
{\sc One-Destination-Max} for Plurality, $r$-Approval with constant $r$ and Veto voting rules are polynomial-time solvable. 
\end{theorem}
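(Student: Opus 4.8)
The plan is to reduce to a bounded number of subproblems, one per candidate destination party. Since {\sc One-Destination-Max} only asks whether \emph{some} destination works, we iterate over all $l$ parties $P$; for each fixed $P$ we compute the maximum number $M_P$ of voters from parties other than $P$ that can be switched into $P$ with $p$ still the unique winner, and answer ``yes'' iff $\max_P M_P \geq k$. The structural fact underlying every subproblem is this: when a voter switches into $P$, it withdraws its points from the candidates approved by its former party and assigns points to the candidates approved by $P$; hence the \emph{only} candidates whose score can increase are those approved by $P$ --- a single candidate for Plurality, namely $\mathrm{top}(P)$; symmetrically, in terms of veto counts, $\mathrm{bot}(P)$ for Veto; at most $r$ candidates for $r$-Approval --- while every other candidate's score is non-increasing. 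Write $s(c)$ for the current score of $c$.

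For Plurality, fix $P$ and let $q = \mathrm{top}(P)$. If $q = p$, every switch is either neutral or moves a point from a rival of $p$ to $p$, so all $|\mathcal{V}| - |P|$ eligible voters may switch. If $q \neq p$, classify the eligible voters by the top candidate $d$ of their current party: switches with $d = q$ are neutral (do them all), switches with $d = p$ only help $p$'s rivals (do none), and each switch with $d \notin \{p,q\}$ transfers one point to $q$, so the number of these we may perform is capped by $s(p) - s(q) - 1$; every other candidate only loses points and therefore stays below $p$ automatically. This yields a closed form for $M_P$. Veto is the mirror image via veto counts: the dangerous case is $\mathrm{bot}(P) = p$, which reduces to a one-dimensional search over the total number of non-neutral switches, and the remaining cases again have closed forms.

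For $r$-Approval with $r$ constant, first group the source parties by their top-$r$ set; there are only ${m \choose r} = O(m^r)$ such sets, so a solution is described by a polynomial-size vector $(j_X)$, where $j_X$ counts the voters switched out of parties of ``type'' $X$, subject to $0 \leq j_X \leq N_X$. Fix $P$ and set $Y = \mathrm{top}_r(P)$. If $p \in Y$, then $p$'s score is non-decreasing and the only candidates that can overtake $p$ are the at most $r - 1$ other members of $Y$; the only thing about a type $X$ that matters is its signature in $\{-1,0,1\}^{r-1}$ recording, for each such rival, whether switching from $X$ raises the rival's score relative to $p$'s. Regrouping by signature leaves a \emph{fixed-dimension} integer linear program --- maximize $\sum j_X$ subject to at most $r - 1$ upper-bound inequalities and box constraints --- solvable in constant time.

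The hard part is the sub-case $p \notin Y$, where $p$'s score may drop, so a priori all $m - 1$ rivals must be kept below $p$. The plan is to additionally guess $p$'s post-switch score $s^*$ (polynomially many values), which turns the winning condition into: (i) at most $s(p) - s^*$ of the switched voters come from parties approving $p$; (ii) each candidate in $Y$ gains at most $s^* - 1 - s(c)$ points; (iii) each candidate $c \notin Y \cup \{p\}$ with $s(c) \geq s^*$ loses at least $s(c) - s^* + 1$ approvals. Conditions (i)--(ii) again collapse to a constant-dimension program once types are grouped by their bounded-length signature on $Y \cup \{p\}$; the genuine obstacle is (iii), which superficially looks like a multicover instance with $\leq r$-element sets. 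The resolution I would pursue exploits that we are \emph{maximizing} the number of switches, so extra coverage never hurts, together with the observation that for $s^* = 1 + \max_{c \notin Y \cup \{p\}} s(c)$ the constraints (iii) are vacuous, plus an exchange argument limiting how much smaller choices of $s^*$ can gain; this pins every remaining case to a program of the above constant-dimension form, giving an overall polynomial-time algorithm.
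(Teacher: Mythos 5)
Your overall decomposition --- guess the destination party $P$, observe that only the candidates approved by $P$ can gain score, and compute for each $P$ the maximum number $M_P$ of safe switches --- is the same skeleton as the paper's proof, and your handling of the case $p\in\mathrm{top}_r(P)$ via a fixed-dimension integer program over signature classes is actually more careful than the paper's (which only insists that one disapprover of each rival be retained). But there are two genuine gaps. First, your Plurality formula is wrong: ruling out all switches from parties whose top candidate is $p$ can strictly undercount $M_P$. Take one party $P'$ of $60$ voters with top $p$, a singleton party $P$ with top $q$, and $39$ singleton parties with distinct tops $d_1,\dots,d_{39}$. With destination $P$ one may switch all $39$ $d_i$-voters \emph{and} $9$ voters out of $P'$, leaving $p$ with $51$ votes against $49$ for $q$, so $M_P=48$; your rule caps $M_P$ at $39$, and the best $p$-top destination yields only $40$, so for $k=48$ your algorithm answers ``no'' on a ``yes''-instance. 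A $p$-top switch costs two units of the gap $s(p)-s(q)$ instead of one, so it is dominated by the other switch types, but once those are exhausted it can still be profitable; and once $p$'s score drops, the claim that every other candidate ``stays below $p$ automatically'' also fails and those constraints must be re-imposed. The repair (a one-dimensional search over the number of $p$-top switches, checking all rivals) is easy, but the closed form as written is incorrect.

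Second, and more seriously, the sub-case $p\notin\mathrm{top}_r(P)$ for $r$-Approval is not actually resolved. Your condition (iii) imposes one covering constraint for every candidate outside $Y\cup\{p\}$ whose score is at least the guessed value $s^*$; there can be $\Theta(m)$ such candidates, so the resulting program is \emph{not} of fixed dimension, and the ``exchange argument'' that is supposed to reduce every choice of $s^*$ to the vacuous one is only announced, never given. It is not obvious that it exists: lowering $s^*$ relaxes (i) (more $p$-approving voters may switch) at the price of activating many constraints of type (iii), and these pull in opposite directions on the objective, which is exactly the trade-off a proof would have to control. For comparison, the paper's own proof sidesteps this entire case by restricting attention to destination parties that approve $p$, without justifying that restriction; you correctly recognized that the case needs an argument, but you have not supplied one, so the proposal does not yet establish the theorem.
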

\begin{proof}
We consider here only $r$-Approval. The cases with Plurality and Veto can be handled similarly. Our polynomial-time algorithm first guesses the destination party among all the parties in the given instance $\mathcal{E=(C,V,P)}$.
Among the total of $|\mathcal{V}|$ such guesses, we discard those with preferences which do not approve $p$.
For each remaining guessed destination parties we do the following. 

Let $C \subseteq \mathcal{C}$ be the set of $r$ candidates which are approved by the preference of the destination party.
Since $p$ is the unique winner, for each candidate $c\in {C} \setminus \{p\}$, there must exist at least one voter disapproving $c$ in the original election.
To maintain $p$ as the unique winner,
for each candidate $c\in C \setminus \{p\}$, there must be at least one voter disapproving $c$ in the original election and this voter cannot
be switched to the destination party. Therefore, we need to find out a minimum set of voters together disapproving ${C}\setminus \{p\}$. Since $|C|\leq r$ and $r$ is a
constant, this set can be found in polynomial time. 
\end{proof}

\begin{theorem}
{\sc{One-Destination-Max}} for the Borda rule  is NP-hard.
\end{theorem}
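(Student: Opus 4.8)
The plan is to reduce from {\sc X3C}. Given a universe $X=\{x_1,\dots,x_{3q}\}$ and $3$-element sets $S_1,\dots,S_n$ in which every $x_j$ lies in exactly three of them, I will build a Borda election with parties and set $k=q$ so that $p$ can survive the departure of $q$ voters into a single destination party if and only if the instance has an exact cover (which then necessarily has size $q$). For each element $x_j$ there is an \emph{element candidate} $c_j$ and, directly below it, a dummy \emph{buddy} $X_j$; besides these and $p$ there are only filler candidates used to set scores. The designated destination party $P$ ranks the candidates so that the pairs $(c_1,X_1),(c_2,X_2),\dots,(c_{3q},X_{3q})$ form pairwise disjoint blocks of two consecutive positions. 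For each set $S_i$ there is a one-voter party $P_i$ whose ballot is obtained from $P$'s by transposing, for each of the three $j$ with $x_j\in S_i$, the adjacent pair $(c_j,X_j)$ into $(X_j,c_j)$. A few auxiliary parties finish the construction.

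The point of this gadget is that $P$ and $P_i$ differ only by transpositions of pairwise disjoint adjacent pairs, so moving one voter of $P_i$ into $P$ raises the Borda score of $c_j$ by exactly $1$ for each $x_j\in S_i$, lowers that of $X_j$ by $1$ for those $j$, and leaves every other score --- in particular $s(p)$ --- unchanged. This is the mechanism that gets around the fact that a Borda ballot is forced to be a full permutation. Arranging the auxiliary parties so that initially $s(c_j)=s(p)-2$ for every $j$ while the buddies and fillers sit suitably below $p$, one obtains: once a set $T$ of voters switches into $P$, the score of $c_j$ becomes $s(p)-2$ plus the number of voters in $T$ whose set contains $x_j$, so $p$ stays the unique winner exactly when every element is covered at most once by the sets of the switching voters; as each $P_i$ holds a single voter, this means those sets are pairwise disjoint. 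Since $\ge q$ pairwise disjoint $3$-sets must exhaust the $3q$-element universe, any solution with $|T|\ge q$ yields an exact cover; conversely an exact cover supplies $q$ disjoint sets, each element covered once, and $p$ survives.

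The real work lies in making this airtight, and I expect the transposition gadget to be the conceptual content and the following bookkeeping to be the laborious part. One must engineer the auxiliary parties so that: (i) all of the stated initial score equalities hold simultaneously --- this is delicate because the $c_j$ occupy different positions of $P$'s block and so have different ``raw'' scores, which one balances, for instance by including a companion party carrying equally many voters whose ballot is $P$'s with the block reversed (so each $c_j$'s two positions sum to a constant); (ii) no party other than $P$ can serve as the destination --- for each such party one exhibits an element candidate or a buddy that it ranks high while the other parties rank it low, which then overtakes $p$ as soon as voters come in, immediately for the auxiliary and companion parties and after about $q$ incoming voters for a party $P_i$ (this last point uses the exactly-three-occurrences assumption, and {\sc X3C} instances with $q$ bounded are settled directly); and (iii) a voter from an auxiliary party cannot be recruited into $T$ --- handled by a sentinel filler that $P$ ranks first and every auxiliary party ranks last, so that moving such a voter into $P$ instantly pushes the sentinel past $p$, whence a legal $T$ consists only of the single-voter set-parties and the analysis above applies.
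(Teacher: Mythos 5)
Your core gadget is genuinely different from the paper's and, for the forward direction and for the destination party $P$, it is sound: because $P$ and $P_i$ differ only by disjoint adjacent transpositions, an incoming $P_i$-voter shifts exactly the scores of $c_j$ and $X_j$ for $x_j\in S_i$ by $\pm 1$ and nothing else, whereas the paper's gadget (substituting dummies $d_1,d_2,d_3$ for the covered elements above $p$ and dropping those elements to just below $p$) moves element scores by amounts up to $\Theta(m)$ per switch. The paper also encodes the cover as the $m/3$ voters who \emph{stay} (setting $k=n-m/3$), while you encode it as the $q$ voters who \emph{leave}; both framings work.

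The gap is in your step (ii), and it is essential, not bookkeeping. If some set-party $P_i$ with $S_i=\{1,2,3\}$ may serve as the destination, then for $l\in S_i$ an incoming voter whose set contains $l$ leaves $c_l$ untouched (both ballots already have $X_l\succ c_l$), so the ``covered at most once'' constraint applies only to elements outside $S_i$. One can then switch $q$ set-voters into $P_i$ whose sets cover $X\setminus S_i$ at most once each but overlap inside $S_i$ or miss parts of it (e.g.\ $S_i=\{1,2,3\}$ with incoming sets $\{1,2,4\}$ and $\{1,5,6\}$): every $c_l$ stays at or below $s(p)-1$, $s(p)$ is unchanged, yet the chosen sets are not an exact cover, so the backward direction fails. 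Your proposed fix --- that the buddies $X_l$, $l\in S_i$, climb past $p$ after about $q$ incoming voters --- needs $s(X_l)\geq s(p)-q+2$ initially, since each buddy gains at most one point per incoming voter (and at most $q-2$ in total, by the three-occurrence bound). But in the construction you describe, $X_l$ sits directly below $c_l$ in essentially every ballot, which pins $s(X_l)$ at roughly $s(c_l)-N+6=s(p)-N+4$ for $N\approx 2n$ ballots --- far more than $q$ below $p$. Lifting the buddies to within $q-2$ of $p$ without disturbing the equalities $s(c_j)=s(p)-2$, the initial winner, or the forward direction requires additional score-adjustment parties that you have not specified and that themselves must be excluded as sources and destinations. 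The very feature that makes your gadget clean (all per-voter score changes are $\pm 1$) is what deprives you of the large one-shot score jumps the paper uses (its candidate $z$, ranked first by every set-party and last elsewhere) to kill off unwanted destination parties cheaply.
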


\begin{proof}
We reduce from X3C.
Let $(X=\{x_1,x_2,...,x_m\},\mathcal{S}=\{S_1,S_2,...,S_n\})$ be an instance of X3C.

We have in total $m+6$ candidates. More precisely, for each $x_i \in X$, we have a corresponding candidate. For simplicity, we will use the same notation $x_i$ to denote the corresponding candidate.
In addition, we have six other candidates $p,d_1,d_2,d_3,y,z$. We create party preferences as follows.

Let $\overrightarrow{X}$ be the order of the elements in $X$ according to the increasing order of their indices.  For each subset $\{x_i,x_j,x_k\} \in \mathcal{S}$ with $i < j < k $, we create one preference $z\succ \overrightarrow{X}[x_i\rightarrow d_1, x_j\rightarrow d_2, x_k\rightarrow d_3]\succ p\succ x_i\succ x_j\succ x_k\succ y$.
Here $\overrightarrow{X}[x_i\rightarrow d_1, x_j\rightarrow d_2, x_k\rightarrow d_3]$ is the linear order obtained from $\overrightarrow{X}$ with
replacing $x_i, x_j, x_k$ by $d_1, d_2, d_3$, respectively. The corresponding party is denoted by $P_{(i,j,k)}$ and has only one voter.

Next, we create one party $P'$ with $n$ voters  and the preference:

 $y\succ p\succ \overleftarrow{X}\succ z\succ d_1\succ d_2\succ d_3$, where $\overleftarrow{X}$ is the reverse order of $\overrightarrow{X}$.
Additionally, we create a party $P$ with one voter and the following preference:

$\overrightarrow{X}\succ p\succ d_1\succ d_2\succ d_3\succ y\succ z$.

It is clear that $p$ is the current winner. More precisely, $s(p)-s(z)=5, s(p)-s(y)=3n+4$ and $s(p)-s(x_i)> 0$. Here, $s(c)$ denotes the Borda score of $c$. We claim that an exact 3-set cover exists if and only if $n-m/3$ voters can switch their parties to a party such that  $p$ is still the winner.

Suppose that there is an exact 3-set cover $S$ for $(X,\mathcal{S})$. Then leave all the parties corresponding to $S$ and the party $P'$ unchanged, and switch all the  other voters into the party $P$. It is easy to check that $p$ is still the winner.

For the reverse direction, we first claim that only $P$ can be the destination party if the constructed instance is a true-instance. $P'$ cannot be the destination party, since otherwise, $y$ would become the winner.  $P_{(i,j,k)}$ cannot be the destination party, since otherwise, either $z$ or some $x_i$ would become the winner. This completes the proof of claim. We further claim that the party $P'$ cannot be instable. This is true, since otherwise,
some $x_i$ would have a higher score than that of $p$. Now suppose that there is no exact 3-set cover. Then there must be an $x_i$ such that after switching $n-m/3$ voters from $\cup_{i,j,k}P_{(i,j,k)}$ to the party $P$, all the remaining voters in the parties $P_{(i,j,k)}$ prefer $x_i$ to $p$. This results in $x_i$ having a greater score than that of $p$, and thus $p$ cannot be the unique winner anymore.
\end{proof}

\begin{theorem}
  {\sc One-Destination-Max} for the Condorcet voting rule is NP-hard.
\end{theorem}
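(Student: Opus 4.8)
The plan is to reduce from \textsc{Independent Set}. Given a graph $G=(V,E)$ with $n_V=|V|$ and an integer $t$, and assuming without loss of generality that $G$ has no isolated vertex (so $E\neq\emptyset$), I would build an election with parties as follows. For every edge $e\in E$ there is a candidate $c_e$; besides these there are the distinguished candidate $p$ and two dummy candidates $y,y_2$. For every vertex $v\in V$ there is a one-voter party $P_v$; there is a filler party $F$ with $n_V$ voters; and there is one extra one-voter party $P$, intended to be the only useful destination. The preferences are chosen so that: in $P_v$ the candidates $\{c_e: v\notin e\}$ together with $y_2$ stand above $p$, whereas $\{c_e: v\in e\}$ and $y$ stand below $p$; in $F$ only $y$ stands above $p$; and in $P$ only the $c_e$'s stand above $p$. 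A direct count shows that in this profile $p$ is the Condorcet winner, beating every $c_e$ by a margin of exactly $3$ and beating each of $y,y_2$ by a margin of exactly $1$, and the total number of voters $2n_V+1$ is odd. Finally set $k=t$.

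\textbf{Heart of the construction.} First I would show that, with $P$ chosen as destination, switching a set $T$ of vertex-voters into $P$ keeps $p$ the (unique) Condorcet winner if and only if $T$ is an independent set of $G$. Moving $P_v$ into $P$ lowers the $p$-versus-$c_e$ margin by $2$ for each edge $e\ni v$ and never hurts $p$ against $y$ or $y_2$; since that margin starts at $3$, it remains positive exactly when at most one endpoint of each edge has been moved. I would then check that filler voters cannot help here: two filler voters moving into $P$ already push $p$ below some $c_e$, and one filler voter together with any vertex-switch forces $T$ to be edge-free, hence empty, leaving only a single switch. Thus with destination $P$ the maximum number of voters one can switch while keeping $p$ the winner is precisely the independence number of $G$; in particular, if $G$ has an independent set of size $t$, switching those $t$ vertex-voters into $P$ witnesses a ``yes''.

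\textbf{Ruling out the other destinations.} With $F$ as destination, every voter joining $F$ pushes $y$ above $p$ (the $p$-over-$y$ margin is only $1$), so nobody can switch. With a vertex party $P_{v'}$ as destination, the $p$-over-$y_2$ margin of $1$ forbids every filler voter and the $P$-voter from joining $P_{v'}$ (they all rank $p\succ y_2$, whereas $P_{v'}$ ranks $y_2\succ p$), so only vertex-voters $v\neq v'$ may move; and the $c_e$-analysis applied to the edges not incident to $v'$ again forces the moved set to be independent in $G-v'$, so at most the independence number of $G-v'$ voters move, which is at most the independence number of $G$. Hence the maximum, over all choices of destination, of the number of voters that can be switched while keeping $p$ the winner equals exactly the independence number of $G$, and the constructed instance is a yes-instance of \textsc{One-Destination-Max} for the Condorcet rule if and only if $G$ has an independent set of size at least $t$.

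\textbf{Main obstacle.} The step I expect to be the real work is this last one: because $P$ and the vertex parties $P_{v'}$ have almost identical preferences, one must make sure that no $P_{v'}$ (and no configuration mixing in filler voters) ever beats the ``switch a maximum independent set into $P$'' solution. This is exactly the purpose of the two auxiliary candidates $y$ and $y_2$; the delicate point is to position $p,y,y_2$ in the various party preferences and to size the filler party so that $p$ beats $y$, $y_2$, and every $c_e$ in the starting profile, while all of these margins are simultaneously tight enough to pin down the optimal behaviour. Once the gadget is fixed, both directions of the equivalence reduce to routine bookkeeping of pairwise margins.
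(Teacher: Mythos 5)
Your reduction is correct, but it takes a genuinely different route from the paper's. The paper reduces from X3C: it builds $2n+m+9$ candidates, including two size-four blocking sets $A,B$ and two size-$n$ chains $C,D$, together with paired parties $P_t,P_t'$ per subset and four guard parties $\bar P_i$, and then needs four separate claims just to pin down that $P$ is the only viable destination and that only the $P_t'$ parties can be instable. You instead reduce from \textsc{Independent Set} with a far leaner gadget: $m+3$ candidates, tight margins ($3$ against each edge candidate, $1$ against each of the two dummies $y,y_2$), a filler party killed by $y$, and vertex destinations killed by $y_2$. I checked the pairwise tallies ($N(p,c_e)=n_V+2$ vs.\ $n_V-1$; $N(p,y)=N(p,y_2)=n_V+1$ vs.\ $n_V$ on $2n_V+1$ voters) and the full destination case analysis, including the mixed filler-plus-vertex case for destination $P$ and the observation that a set switched into $P_{v'}$ must be independent in $G-v'$ and hence has size at most $\alpha(G)$; everything closes, given your standing assumption that $G$ has no isolated vertices. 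Your construction is essentially the Condorcet analogue of the paper's own appendix proofs for \textsc{One-Destination-Max} under Maximin and Copeland (Theorems 8 and 9), which likewise use vertex parties, an $n$-voter filler party, and a singleton destination ranking all edge candidates above $p$; what your version buys is uniformity across the three Condorcet-consistent rules and a much smaller candidate set, at the cost of needing the no-isolated-vertex assumption, whereas the paper's X3C construction is self-contained but considerably heavier.
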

\begin{proof} We give a reduction from an X3C-instance $(X,\mathcal{S})$ to an instance $\mathcal{E=\{C,V,P\}}$ of {\sc One-Destination-Max}.
  The candidate set is $\mathcal{C}= X\cup A\cup B\cup C\cup D\cup \{p\} $, where $A=\{a_1,a_2,a_3,a_4\}~, B=\{b_1,b_2,b_3,b_4\},
  C=\{c_1,c_2,...,c_n\}$ and $D=\{d_1,d_2,...,d_n\}$. The set $X$ contains a candidate for each element in $X$. Hence, there are totally $2n+m+9$ candidates.
  For a set $U$ with
  elements $u_1,u_2,...,u_t$, we denote by $\overrightarrow{U}$ the order $(u_1 \succ u_2 \succ \dots \succ u_t)$  and  by $\overleftarrow{U}$
  the reverse order of $\overrightarrow{U}$. Moreover, for two elements $u_i,u_j$ with $i<j$, we use $\overrightarrow{U}[u_i,u_j]$
 to denote the suborder $(u_i \succ u_{i+1} \succ \dots \succ u_j)$.
  We assume a fixed order $(S_1 \succ S_2 \succ \dots \succ S_n)$ for the subsets in $\mathcal{S}$. We create two preferences for each $S_t \in \mathcal{S}$ with
   $S_t=\{x_i,x_j,x_k\}$:

1. $\overrightarrow{D}[d_{t+1}, d_n]\succ \overrightarrow{C}[c_1,c_t]\succ \overrightarrow{A}
\succ x_i\succ x_j\succ x_k \succ p \succ \overrightarrow{X}\setminus S_t\succ \overleftarrow{B}\succ
\overrightarrow{C}[c_{t+1},c_n] \succ \overrightarrow{D}[d_1,d_t]$. The corresponding party denoted by $P_t$, has only one voter.

2. $\overrightarrow{D}[d_1,d_t]\succ \overrightarrow{C}[c_{t+1},c_n]\succ \overrightarrow{B} \succ
\overleftarrow{X}\setminus S_t \succ p\succ x_k\succ x_j\succ x_i\succ \overleftarrow{A} \succ
\overrightarrow{C}[c_{1},c_t]\succ \overrightarrow{D}[d_{t+1},d_n]$. The corresponding party, denoted by $P_t'$, has only one voter.

Next, we construct  additional preferences as follows, each representing a party of its own. Each of the parties has only one voter.

$a_1 \succ b_1 \succ p \succ \overrightarrow{X} \succ \overrightarrow{A}\setminus \{a_1\} \succ
\overrightarrow{B}\setminus \{b_1\} \succ \overrightarrow{C}\succ \overrightarrow{D}$; the corresponding party is denoted by $\bar{P}_1$

$a_2 \succ b_2 \succ p \succ \overleftarrow{X} \succ \overrightarrow{A}\setminus \{a_2\} \succ
\overrightarrow{B}\setminus \{b_2\} \succ \overleftarrow{C}\succ \overleftarrow{D}$; the corresponding  party is denoted by $\bar{P}_2$

$a_3 \succ b_3 \succ p \succ \overrightarrow{X} \succ \overrightarrow{A}\setminus \{a_3\} \succ
\overrightarrow{B}\setminus \{b_3\} \succ \overrightarrow{C}\succ \overrightarrow{D}$; the corresponding party  is denoted by $\bar{P}_3$.

$a_4 \succ b_4 \succ p \succ \overleftarrow{X} \succ \overrightarrow{A}\setminus \{a_4\} \succ
\overrightarrow{B}\setminus \{b_4\} \succ \overleftarrow{C}\succ \overleftarrow{D}$; the corresponding party is denoted by $\bar{P}_4$.

$\overrightarrow{X} \succ p \succ \overrightarrow{A} \cup \overrightarrow{B} \succ
\overrightarrow{C} \succ \overrightarrow{D}$; the corresponding party is denoted by $P$.

In total, we have $2n+5$ voters. Now we arrive at the correctness proof of the reduction.









$(\Rightarrow:)$ Clearly, $p$ beats every other candidate  and thus is the current winner. Suppose that $(X,\mathcal{S})$ has an exact 3-set cover $S$. We claim that after switching all the voters in the parties $P_t'$, which correspond to the subsets in  $S$, to the party $P$, $p$ will still be the winner. Observe that $p$ beats every candidate in $C\cup D\cup A\cup B$ in the final election. Since $S$ is an exact 3-set cover, for each $x\in X$ there is exactly one party $P_t'$ in the solution preferring $p$ to $x$.
 Even though  the party $P$ prefers $x$ to $p$, $p$ still beats $x$ by $n+3$. The claim follows.

$(\Leftarrow:)$ Suppose that we switch a set $S'$ of $m/3$ voters to a particular party in $\mathcal{P}$ such that $p$ remains the Condorcet winner.
We claim the following:

\begin{claim}
No $\bar{P}_i$, where $i=1,2,3,4$, can be the destination party.
\end{claim}
\begin{proof}
Due to the symmetry, we only need to give the proof for the party $\bar{P}_1$. All other cases are similar.
Observe that all parties other than $\bar{P}_1$ prefer $p$ to  either $a_1$ or $b_1$, or both.
Since there are $n+1$ parties preferring $a$ to $p$ in the original election, switching any arbitrary five voters to the party $\bar{P}_1$ will make $a_1$ or $b_1$ beat $p$, contradicting with the fact that $p$ is the Condorcet winner in the final election.
\end{proof}

\begin{claim}
None of $P_t$ and $P_t'$ can be the destination party, for all $t\in \{1,2,...,n\}$.
\end{claim}
\begin{proof}
Due to the symmetry, we only need to give the proof for $P_t$ for a certain $t$. Suppose that this is not true and we have switched a set $S'$ of $m/3$ voters to the party $P_t$, where $S_t=\{x_i,x_j,x_k\}$, without changing the winning candidate. There is at most one voter in $S'$ which is not from the parties $\bigcup_{z\in \{1,...,n\}} P_z$, since otherwise, some $a_i$ would beat $p$, contradicting that $p$ is the Condorcet winner. Therefore, at least $m/3-1$ voters of $S'$ are from $\bigcup_{z\in \{1,...,n\}} P_z$. Moreover, at most two voters of $S'$ are from $\bigcup_{z>t}P_z$, since otherwise, $d_t$ would beat $p$. Symmetrically, at most two voters of $S$ are from $\bigcup_{z<t}P_z$ (otherwise, $c_t$ would beat $p$), implying that $|S'|\leq 5$, a contradiction.
\end{proof}

Due to the above two claims, the only possible destination party is $P$. We further claim the following facts.

\begin{claim}
None of the parties $\bar{P}_i$ can be instable, where $i=1,2,3,4$.
\end{claim}
\begin{proof}
Observe that $p$ beats every $x_i$ by $n+4$. Therefore, if we switch some voter in $\bar{P}_i$ to the party $P$, then no other voter can be switched to $P$, since every voter not in the party $P$ prefers $p$ to some $x_i$. The claim follows.
\end{proof}

\begin{claim}
None of the parties $P_t$ can be instable for $t\in \{1,2,...,n\}$.
\end{claim}
\begin{proof}
Suppose that we switch some voter in a certain party $P_t$ to $P$, where $S_t=\{x_i,x_j,x_k\}$. Due to Claim 5, no voter is switched  from $\bigcup_z \bar{P}_z$ to $P$. 
Besides, at most one voter in $\bigcup_z P_{z}'\cup P_z\setminus \{P_t\}$ can be switched to $P$, since otherwise, some $x_i$ would beat $p$, contradicting that $|S'|=m/3$.
\end{proof}

According to the above claims, the instable parties can only be from  $S'\subseteq \bigcup_{z\in \{1,2,...,n\}} P_z'$. Since $p$ beats every $x\in X$ by $n+1$, at most one voter in $S'$ prefer $p$ to $p$, implying that the subsets corresponding to $S'$, that is $S=\{\{x_i,x_j,x_k\}\mid \exists{t}, P_t'\in S'~{and}~S_t=\{x_i,x_j,x_k\}\}$, form an exact 3-set cover.
\end{proof}

\begin{theorem}\label{maxmaximin}
{\sc One-Destination-Max} for the Maximin rule is NP-hard.
\end{theorem}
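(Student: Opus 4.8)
The plan is to reduce from X3C, following the same high-level template as the Condorcet and Borda cases for {\sc One-Destination-Max}: build an election in which $p$ is the current winner by a controlled margin against every other candidate, arrange that the only viable destination party is a single ``dump'' party $P$ whose preference ranks $p$ high, and set $k = n - m/3$ (or $m/3$, depending on which side we count) so that $p$ survives exactly when the voters left \emph{behind} in the gadget parties correspond to an exact cover. The maximin score of $p$ is the minimum over $d \neq p$ of $N(p,d)$, so the key design constraint is that switching voters into $P$ must not push any single pairwise margin $N(p,d)$ below the eventual maximin threshold; conversely, when an exact cover fails, some element-candidate $x_i$ must have its worst pairwise comparison beaten (or at least weakly beaten, in the unique-winner sense) by $p$'s, forcing $p$ out.

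Concretely, I would introduce candidates $x_1,\dots,x_m$ for the elements, the distinguished $p$, and a handful of auxiliary ``guard'' candidates (as in the Condorcet proof, groups like $A$, $B$, $C$, $D$ serving to block every party except $P$ from being the destination and to block the ``large'' party $P'$ from being instable). For each $S_t = \{x_i,x_j,x_k\}$ I would create one or two single-voter parties whose preferences rank $p$ above $x_i,x_j,x_k$ but below the other $x$'s in a way that, after switching, each element $x$ is ``helped'' against $p$ by exactly the sets not containing it; an exact cover then corresponds to leaving exactly one $p$-over-$x$ voter per element in place. I would also include a padding party $P'$ with many voters guaranteeing $p$ beats each guard candidate by a wide, switch-proof margin, and the destination party $P$ with preference $\overrightarrow{X} \succ p \succ (\text{guards})$ so that dumping voters into $P$ helps the $x_i$'s against $p$ but nothing else. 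Setting the margins: $p$ should beat each $x_i$ by a small slack (something like $n+1$ against a background where $m/3$ switches into $P$ cost $p$ at most $m/3$ against $x_i$, but a failed cover costs more), while $p$ beats every guard by a margin large enough to be immune to the allowed number of switches.

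The correctness proof then splits in the usual two directions. For ($\Rightarrow$): given an exact cover $S$, switch the voters of the gadget parties \emph{not} corresponding to $S$ into $P$; check that $p$ still (weakly, and in fact strictly for the unique-winner version with an appropriate tie-break of the margins) beats every other candidate, hence has strictly the largest maximin score. For ($\Leftarrow$): a sequence of claims, each of the form ``party $Q$ cannot be the destination'' (because switching even a few voters into $Q$ lets some guard candidate beat $p$ in a pairwise margin, killing $p$'s maximin score) and ``party $Q'$ cannot be instable'' (because removing voters from $Q'$ has the same effect), narrowing the destination to $P$ and the instable parties to the gadget parties; then counting forces that the sets whose gadget voters were \emph{not} switched form a cover of size exactly $m/3$, i.e.\ an exact cover.

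The main obstacle I expect is the simultaneous balancing of all the pairwise margins: unlike Borda, where one manipulates aggregate scores, maximin is governed by the \emph{worst} pairwise comparison of each candidate, so I must verify that \emph{no} collateral pair $N(p,d)$ for a guard $d$, nor any pair among the guards themselves, accidentally becomes the binding minimum in either direction. This requires carefully choosing the guard-block orderings (the $\overrightarrow{A}$, $\overleftarrow{B}$, $\overrightarrow{C}[\,\cdot\,]$, $\overrightarrow{D}[\,\cdot\,]$ patterns, which in the Condorcet proof are tuned so that $c_t$ and $d_t$ become endangered precisely when more than two voters are switched from one side) so that the counting argument ``at most one non-gadget voter, at most two from each side'' goes through and pins $|S'| = m/3$ exactly. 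Given space constraints, I would state the construction in full, then present the margin table as a lemma (``$p$ beats every guard by $\geq$ (big), and beats $x_i$ by exactly (small)''), and defer the most tedious verifications, since the structural claims mirror the Condorcet case almost verbatim.
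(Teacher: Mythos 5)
Your plan diverges from the paper's: the paper proves this theorem by a reduction from {\sc Independent Set}, not X3C, using only the candidates $\{a,b,p,e_1,\dots,e_m\}$, one single-voter party per vertex with preference $a \succ \overrightarrow{E\setminus E(v)} \succ p \succ \overrightarrow{E(v)} \succ b$, an $n$-voter party $P'$, a one-voter dump party $P$, and $k=t$. The fit with maximin is exactly the point: an independent set guarantees that at most one switched voter prefers any given $e_i$ to $p$, which directly controls the single binding quantity $\min_d N(p,d)$, and conversely two switched voters sharing an edge push that edge candidate's minimum up to $p$'s. Your X3C template is not unreasonable in outline, but as written it is a sketch with the decisive step left open, so I cannot count it as a proof.

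The concrete gap is that the guard machinery you propose to import from the Condorcet proof does not transfer to maximin semantics. In the Condorcet argument, every ``$Q$ cannot be the destination'' and ``$Q'$ cannot be instable'' claim is closed by exhibiting one candidate that comes to beat or tie $p$ pairwise, which immediately destroys Condorcet-winnership. Under maximin, a guard $a_1$ beating $p$ dethrones $p$ only if $a_1$'s \emph{own} minimum pairwise comparison (possibly against some other guard or some $x_j$) rises to at least $p$'s minimum; it is neither necessary nor sufficient that $a_1$ beat $p$. So every one of your structural claims has to be re-proved in terms of minima, and symmetrically, in the ``no exact cover'' direction you must show that the uncovered element $x_i$ actually attains a \emph{minimum} at least $s(p)$ --- but $x_i$'s minimum may well be realized against another element $x_j$ that the dump party's order $\overrightarrow{X}\succ p\succ\cdots$ treats asymmetrically, not against $p$. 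You name this obstacle yourself (``no collateral pair \dots accidentally becomes the binding minimum'') and then defer it; since no explicit preferences or margins are given, there is nothing to verify, and this is precisely where such a construction typically breaks. To repair it you would either have to write out the full election with a complete pairwise table (as the paper does in Table~\ref{tablemaximinonedestinationmax}) and check every minimum in both directions, or switch to the paper's {\sc Independent Set} reduction, whose ``at most one voter per edge'' invariant makes the minima tractable by design.
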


Next we show that {\sc One-Destination-Max} remains hard for the Copeland voting rule.

\begin{theorem}\label{thm:copemax}
{\sc One-Destination-Max} for Copeland$^{\alpha}$ is NP-hard, for every $0\leq \alpha\leq 1$.
\end{theorem}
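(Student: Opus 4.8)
The plan is to reduce from X3C, reusing the architecture of the {\sc One-Destination-Max} reduction for Condorcet but adding one extra ``shadow'' candidate $\bar{p}$ whose sole purpose is to catch up to $p$ in Copeland$^{\alpha}$ score the instant $p$ stops beating everyone. Given an X3C instance $(X=\{x_1,\dots,x_m\},\mathcal{S}=\{S_1,\dots,S_n\})$ with every element in exactly three sets, I would use a candidate $x_i$ for each element, the distinguished candidate $p$, the shadow candidate $\bar{p}$, and a constant number of guard candidates $A=\{a_1,\dots,a_4\}$, $B=\{b_1,\dots,b_4\}$ playing the same role they play in the Condorcet proof. On the party side there is, for each $S_t$, a single-voter party $P_t$ (and possibly a second one $P_t'$) whose preference agrees with a distinguished party $P$ on the relative order of $p$ and $x$ for every $x\notin S_t$ but disagrees for the three $x\in S_t$, together with a ``ballast'' party $P'$ carrying enough voters to fix all pairwise margins at prescribed values. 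The number of voters would be chosen odd, so that no two candidates ever tie head-to-head in any election reachable by switching; consequently every candidate's Copeland$^{\alpha}$ score equals its number of pairwise victories, and the whole argument becomes independent of $\alpha\in[0,1]$ for free.

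I would arrange the preferences so that in the initial election $p$ beats every other candidate --- hence is the unique Copeland$^{\alpha}$ winner with score $|\mathcal{C}|-1$ --- while $\bar{p}$ beats every candidate except $p$ (score $|\mathcal{C}|-2$, the unique runner-up), and every other candidate is strictly below. The crucial tuning is the head-to-head margin of $p$ over each $x_i$: it should be exactly $3$, large enough that after one ``hit'' (one switched set-party $P_t$ with $x_i\in S_t$) $p$ still beats $x_i$, but small enough that after two hits $x_i$ beats $p$. Since the set-parties agree with $P$ on the comparison $p$ versus $\bar p$ and on every comparison with a guard, switching set-parties to $P$ changes no Copeland-relevant comparison other than the $p$-versus-$x_i$ ones. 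I would set $k=m/3$. Because $m/3$ triples occupy exactly $m$ ``element slots'', switching exactly $m/3$ set-parties either hits every element exactly once (an exact cover) or hits some element at least twice (anything else), and switching more than $m/3$ set-parties necessarily over-covers some element.

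The forward direction is then straightforward: from an exact cover $S$, switch the $m/3$ set-parties corresponding to $S$ to $P$; each $x_i$ is hit exactly once, so $p$ still beats everyone and remains the unique Copeland$^{\alpha}$ winner, and since $|S|=m/3\ge k$ this is a ``yes''. For the reverse direction I would replay the chain of claims from the Condorcet proof --- $P$ is the only feasible destination; no guard party and not $P'$ can be instable; hence the instable parties are set-parties --- with the single global replacement that ``$p$ fails to beat some candidate'' now reads ``$p$ drops to score $|\mathcal{C}|-2$ and therefore ties the untouched $\bar p$'', so that $p$ is no longer the \emph{unique} winner. Then at least $k=m/3$ set-parties are switched, so some element is hit at least twice unless the switched set-parties form an exact cover; an over-covered $x_i$ beats $p$, whence $p$ ties $\bar p$, a contradiction. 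Hence a ``yes'' forces an exact cover.

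The main obstacle is the global margin bookkeeping that makes all of this watertight. I must choose the size of $P'$ and the internal placement of $\bar p$, $A$, $B$ and the element block inside every party preference so that simultaneously: (a) $p$ beats everyone and $\bar p$ beats everyone but $p$ with enough slack to survive $k$ arbitrary switches on the comparisons we do not want disturbed; (b) the $p$-versus-$x_i$ margin is exactly $3$ and is moved only by switching set-parties containing $x_i$; (c) no guard candidate, no $x_i$, and no auxiliary candidate can ever reach score $|\mathcal{C}|-1$, and no combination of them with $\bar p$ can leave $p$ the unique top --- in particular the Condorcet-proof arguments that block non-$P$ destinations and block instability of guard and ballast parties must be re-verified for the enlarged candidate set; and (d) the odd total keeps the election tie-free so that the reduction works verbatim for every $\alpha\in[0,1]$. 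Checking that no unintended candidate or combination ever dethrones, or fails to dethrone, $p$ under an admissible switch is the delicate part, but the structure it must satisfy is exactly the one sketched above.
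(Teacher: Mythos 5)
There is a genuine gap, and it sits at the heart of your gadget. You assert that each set-party's preference can ``agree with $P$ on the relative order of $p$ and $x$ for every $x\notin S_t$ but disagree for the three $x\in S_t$,'' and that consequently ``switching set-parties to $P$ changes no Copeland-relevant comparison other than the $p$-versus-$x_i$ ones.'' Two linear orders that differ exactly on the pairs $\{(p,x): x\in S_t\}$ and on nothing else can only exist if the three elements of $S_t$ are precisely the candidates lying between $p$'s two positions; since the triples $S_t$ interleave arbitrarily within $X$, no single choice of $P$'s order admits such a companion order for every $t$ --- moving $p$ past a scattered triple necessarily disturbs comparisons with the elements in between, on pain of creating a cycle. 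This is exactly why the paper's Condorcet construction needs the \emph{paired} parties $P_t,P_t'$ and the blocker blocks $C,D$: the side effects of a switch on the $X\times X$, $A$, $B$ comparisons do not vanish, they are merely argued to be harmless. For Condorcet that argument only has to control $p$'s own pairwise contests; for Copeland$^\alpha$ you must control every candidate's count of victories, since a side effect that lets some $x_j$ or guard pick up extra wins can promote it to score $|\mathcal{C}|-1$ even while $p$ still beats everyone. Your conditions (a)--(d) acknowledge this bookkeeping but defer all of it, so the reduction is a plan rather than a proof, and its central simplifying assumption is unrealizable as stated.

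For comparison, the paper's actual proof reduces from {\sc Independent Set} rather than X3C: edge candidates $e_i$ have margin $N(p,e_i)-N(e_i,p)=3$ over an odd electorate of $2n+1$ voters (which kills ties and hence the dependence on $\alpha$, as in your sketch), so that switching \emph{both} endpoints of an edge flips the $p$-versus-$e_i$ contest. The role of your shadow candidate $\bar p$ is played by $a_1$, whose score is pinned at $s(p)-1$ so that any lost contest demotes $p$ from unique winner. Edges having exactly two endpoints is what makes the ``two hits flip it'' arithmetic line up with independence, and the blocks $A,B,C$ are sized so that the destination and instability claims can be checked candidate by candidate. If you want to salvage the X3C route you would need to import the $P_t/P_t'$ pairing and the counter gadgets wholesale and then redo the score accounting for every candidate class under Copeland, which is the part your proposal leaves open.
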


Due to space limitation, the proofs for the above two theorems are deferred to the Appendix.

\section{Conclusion}
We examined the election systems with parties. Here, parties can be partitioned into stable and instable parties. Since members of instable switch to stable parties, the outcome of the election could be diverse to the prediction made based on the preferences of the parties. We introduced two parameters {\sc{Min}} and {\sc{Max}} to measure the credibility of the prediction of such elections and present a comprehensive study of the complexity for computing {\sc{Min}} and {\sc{Max}} under the most common positional scoring rules {Plurality, $r$-Approval, Borda, Veto} and three Condorcet-consistent rules (Condorcet, Maximin, Copeland).

An avenue for possible future research could be to investigate other variants of the model studied here. For instance, practical applications indicate that the members of one party can switch only to the parties, which have similar preferences as their own. More formally, a member of the party $P$ can switch to another party $P'$, only if the "distance" between the preferences of $P$ and $P'$ is bounded. Here, the distance measure could be swap-distance, Kendall-Tau-distance, etc. It could be also of practical interest to assume that the number of members leaving an instable party is bounded. 

 \bibliographystyle{aaai}	
 \bibliography{myrefs}

\section*{Appendix}

{\bf{Proof of Theorem 4}}
\begin{proof}
We give a reduction from an X3C instance $(X,\mathcal{S})$ to an instance $\mathcal{E=\{C,V,P\}}$ of {\sc One-Destination-Min}.

For each $x\in X$, we create a candidate. For convenience, we still use $x$ to denote the candidate. In addition, we have four candidates $p, z, \alpha$ and $\beta$. For each set ${S}_t \in \mathcal{S}$ with $S_{t}=\{x_i, x_j, x_k\}$ and $i < j < k$, create the following preference:

$x_i \succ x_j \succ x_k \succ z \succ p \succ X \setminus \{x_i, x_j, x_k\} \succ \beta \succ \alpha$.

This preference represents a party with only one voter.


Next, create $n-\frac{m}{3}$ new voters; the half of them forms one party with the following preference:

 $\beta \succ \alpha \succ p \succ \overrightarrow{X} \succ z$.

The other half has the preference:

 $\alpha \succ p \succ \overrightarrow{X} \succ z \succ \beta$.

Let $B_1,B_2,...,B_{m/3}$ be subsets of $X$, where $B_i=\{x_{3i-2},x_{3i-1},x_{3i}\}$.
For each $B_i$, create two voters forming two parties with the following two preferences, respectively:

$\beta \succ \alpha \succ p \succ {X}\setminus B_i \succ z \succ B_i$.

$\alpha \succ p \succ {X}\setminus B_i \succ z \succ B_i\succ \beta$.

Finally, we create a party with one voter and the preference: $z \succ \overrightarrow{X} \succ p \succ \alpha \succ \beta$. This party is denoted by $P$, which we later prove to be the destination party.
Overall, $|\mathcal{C}|=m+4$ and $|\mathcal{V}|=2n+\frac{m}{3}+1$.
Next, we calculate the score of each candidate.
In the following, $s(c)$ denotes the maximin score of the candidate $c$, and $\min (c)$ is the set of candidates $c'$, which reach the minimum value of $N(c,c')$:

$s(\beta)=(n + \frac{m}{3}) / 2 $ and $\min{(\beta)}=\{p, z, x_i \}$,

$s(\alpha)= (n + \frac{m}{3}) / 2+1 $ and $\min{(\alpha)}=\{\beta\}$,

$s(p)=n+1$ and $\min{(p)}=\{\alpha\}$,

$s(z)=n$ and $\min{(z)}=\{x_i\}$,

$s(x_i)=4$ and $\min{(x_i)}=\{p\}$.

Clearly, $p$ is the current winner. Suppose that there is an exact 3-set cover $S$. If all the voters
corresponding to $S$ switch to the party $P$, $z$ would beat every $x_i$ by $n+1$; and thus $p$ is not the unique winner anymore. It remains to show the other direction. Observe that the only candidate, which
could have a score at least that of $p$ is $z$. Therefore, if $n/3$ voters switch to some party to make $p$ not the unique winner, the destination party
can only be the party $P$. The instable parties can only be the ones corresponding to $\mathcal{S}$, since with other parties being instable, the score of $p$ would increase.
However, the set corresponding to the voters which are switched to the destination party must be an exact 3-set cover, since otherwise, there would exist an $x_i$ such that at
most $n$ voters prefer $z$ to $x_i$, resulting in $p$ still being the winner.
\end{proof}
\bigskip

\noindent{\bf{Proof of Theorem 8}}
\begin{proof}
We give a reduction from {\sc Independent set}. Given an instance
$I=(G,t)$ of {\sc{Independent Set}}, where $G=(V,E)$, $E=\{e_1, \dots, e_m\}$, $n=|V|$, we create an instance $\mathcal{E=(C,V,P)}$ of  {\sc One-Destination-Max} as follows.

The candidate set is $\mathcal{C}=\{a, b, p, e_1, \dots, e_m\}$.
 Let $\overrightarrow{E}=(e_1 \succ e_2 \succ \dots \succ e_m)$ be an order of $E$.
We first create a set $Z$ of $n$ parties corresponding to the vertices in the graph.
More specifically, for each vertex $v \in V$, we create a preference defined as follows:

$a \succ \overrightarrow{E \setminus E(v)} \succ p \succ \overrightarrow{E(v)}  \succ b $.

Each preference represents a party with one voter, denoted by $P_v$.
Then, we create a party $P'$ containing $n$ voters with the preference:

$b \succ p \succ \overleftarrow{E} \succ a$.

Finally, we create a party $P$ containing only one voter with the preference:

$\overrightarrow{E} \succ p \succ b \succ a$.

\begin{table}\begin{center}
\begin{tabular}{|c|c|c|c|c|c|} \hline
 & $p$ & $b$ & $e_j(i > j)$ & $e_j (i < j)$ & $a$ \\ \hline

$p$ & - & $n+1$ &$n+2$ & $n+2$ & $n+1$\\ \hline

$b$ & $n$ &- & $n$& $n$ & $n+1$   \\ \hline

$e_i$ & $n-1$ & $n+1$ & $\geq n$ & $\geq n-1$ & $n+1$ \\ \hline

$a$ & $n$ &  $n$ &  $n$ &n &  - \\ \hline
\end{tabular}
\caption{Comparisons between candidates in proof of Theorem 8,  where the entry in the row of $x $ and the column of $y$ denotes the number of voters who prefer $x$ to $y$.}
\label{tablemaximinonedestinationmax}\end{center}
\end{table}

Finally, set $k=t$. Now we prove the correctness of the reduction. We refer to Table \ref{tablemaximinonedestinationmax} for the comparison of scores of candidates.
It is clear from the table, that $p$ is the current unique winner as $p$ is preferred to  all other candidates by at least $n+1$ voters.

$(\Rightarrow:)$ Assume that $I$ is a true-instance and $S$ is an independent set of $G$ of size $t$.
 Consider the election after $k=t$ voter corresponding to $S$ switch to the party $P$. Let $V_s$ be the set of these $k=t$ voters.
It is clear that $p$ beats $a$ and $b$ by $n+1+k$ and $n+1$, respectively.
Since $S$ is an independent set, for each edge $e_i$, there is at most one voter in $V_s$ which prefers $e_i$ to $p$.
 Hence, $p$ beats every edge candidate $e_i$ by at least $n+1$,
implying the maximin score of $p$ is $n+1$. Moreover, the scores of $a$ and $b$ do not increase.
It remains to show that score of every $e_i$ still remains less than that of $p$.
 To check this, consider the comparison of the scores of $p$ and $e_i$.
Since $S$ is an independent set, the same reason discussed above implies that every $e_i$ beats $p$ by at most $n$.
 Thus, the maximin score of $e_i$ cannot be greater than $n$, implying that $p$ still remains the unique winner.

$(\Leftarrow:)$ Assume that it is possible to switch a set $S'$ of $k$ voters in $\mathcal{V}$ from their original parties to the destination party such that $p$ still remains the winner in the overall
election.

We first claim that $P'$ cannot be the destination party. It is easy to see that if $P'$ is the destination party, $b$ will become the new winner replacing $p$. We then distinguish the following cases:

{\bf{Case 1. {\it \bfseries P} is the destination.}} In this case, we can assume that at most one voter in $S'$ is from $P'$, since otherwise, $e_1$ would replace $p$ as a winner.
Assume now that there is exactly one voter of $S'$ which belongs to the party $P'$. Clearly, all other voters of $S'$ come from the parties in the set $Z$.
Since the party $P$ prefers every $e_i$ to $p$, and the parties in the set $Z$ prefer $p$ to some edge candidates, the score of $p$ will be at most $n$.
However, $e_1$ has a score at least $n$, contradicting that $p$ is the unique winner. Based on the above fact, it is safe to  assume that all the votes in $S'$ belong to the parties in the
set $Z$. We claim now that the vertices corresponding to the voters of $S'$ form an independent set. If this is not true, there must be some edge candidates, each of which is preferred to $p$ by two voters of $S'$. Let $e_i$ be such an edge candidate with maximum index $i$. Consider the election after all the voters in $S'$ are switched to the party $P$. It is clear that $e_i$ beats $p, a, b$ by at least $n+1$, and $e_i$ beats $e_j$ for all $j<i$ by at least $n$. Now consider the comparison between some $e_j$ with $j>i$ with $e_i$. Since the voters in the set $Z$ which  prefer $e_j$ to $e_i$, are switched to the party $P'$ where the voters prefer $e_i$ to $e_j$, all the voters in the set $Z$ and all the voters in
 the party $P$ prefer $e_i$ to $e_j$, implying that $e_i$ beats $e_j$ by $n+1$. Thus, we conclude that $e_i$ has a final score of $n$. However, since $p$ beats $e_i$ by $n$ in
the final election, $p$ is not the unique winner anymore.

{\bf{Case 2.}} Now we consider the case that the destination party is some $P_v$. Again, we claim that the vertices corresponding to $S'$ form an independent set,
 if $p$ is still the unique winner. For the sake of contradiction, assume this is not true. Then there must be an edge $e_i=(u,w)$ with minimum index $i$, such that two voters in $S'$ are switched to the
party $P_v$. Note that  $e_i$ cannot be adjacent to $v$. Thus, $p$ beats $e_i$ by $n$, implying that the score of $n$ is at most $n$.
Now consider the score of $e_i$. It is easy to verify that $e_i$ beats $a,b,p$ by $n+1$, and beats $e_j$ for all $j>i$ by $n$. Moreover,
 since the only two voters in the set $Z$ which prefer $e_j$ to $e_i$ are switched to the party $P_v$, which prefers $e_i$  to $e_j$, the score of $e_i$
is at least $n$. Therefore, $p$ no longer remains the unique winner, contradicting the assumption.

%
%
%

\end{proof}
\bigskip

\noindent{\bf{Proof of Theorem 9}}
\begin{proof}
We show the NP-hardness by a reduction
from  {\sc IS}. 
Given an instance $I=(G=(V,E),t)$ of {\sc IS} where $E=\{e_1', \dots, e_m'\}$, and $ n=|V|$,
we construct the instance $I'=(\mathcal{C},\mathcal{V}, \mathcal{P})$ of {\sc One-Destination-Max} as follows:
Our candidate set is $\mathcal{C}=A \cup  B \cup  C \cup  \{p\} \cup  E^*$ where $A=\{a_1, \dots ,  a_m\}$, $B=\{b_1, \dots, b_m\}$,  $C=\{c_1, \dots, c_m\}$ and $E^*$ contains a candidate for each edge in $E$.
We construct the following set of parties. Here, the elements in $A$, $B$, $C$ and $E^*$ are ordered according
to the indices of the elements.

\noindent (1) For each $v \in V$  create a party $P_v$ containing one voter with the preference $A   \succ  E^* \setminus E(v) \succ C \succ p \succ E(v)  \succ B$ .
Here $E(v)$ denotes the set of the edges incident to the vertex $v$. Let $Z$ denote the set of the voters in these parties.\\
(2) We have one  party $P'$ containing $n$  voters with the preference $B \succ C  \succ p \succ E^*  \succ A  $

\noindent and one party $P$ containing one voter with the preference $ E^* \succ p \succ A \succ B \succ C  $.

\begin{table}\begin{center}
\begin{tabular}{|c|c|c|c|c|c|} \hline
 & $p$ & $a_i$ & $b_i$ & $e_i$ & $c_i$ \\ \hline

$p$ & - & $n+1$ &$n+1$ & $n+2$ & $1$\\ \hline

$a_i$ & $n$ &- & $n+1$& $n$ & $n+1$   \\ \hline

$b_i$ & $n$ & $n$ & - & $ n$ & $n+1$ \\ \hline

$e_i$ & $n-1$ &  $n+1$ &  $n+1$ & -  &  $n-1$ \\ \hline

$c_i$ & $2n$ & $n$ & $n$ & $n+2$ & - \\ \hline

\end{tabular}\end{center}
\caption{Comparison between candidates in the Proof of Theorem 8, where the entry in the row of $x $ and the column of $y$ denotes the number of voters who prefer $x$ to $y$.}
\label{tab:copemax}
\end{table}

Observe that we have $2n+1$ voters; thus there is no  tie. The initial scores of the candidates,
which follow directly from Table \ref{tab:copemax}, are as follows:

$s(p) = |A| + |B| + |E^*|$

$s(a_i) = (|A| -1) + |B| + |C|  $

$s(b_i) = (|B| -1) + |C|  $

$s(e_i) \leq |A| + |B| + |E^*|-1 $

$s(c_i) \leq 1 + |E^*|+ |C|-1 $

We are ready to prove the correctness.

$(\Rightarrow:)$ Let $S$ be an independent set of size $k$ in $G$. We switch all the voters corresponding to the vertices in $S$ from $Z$ to
party $P$. Since $S$ is an independent set, for every edge candidate $e_i$, there is at most one voter in $S$ preferring $p$ to $e_i$. Thus, even
after the switching of these voters, $p$ still beats every candidate $e_i \in E^*$
by at least $n+1$ voters. Thus, the score of $p$ remains unchanged.
The only  candidates, whose score may increase after the switching of voters, are  the candidates $e_i \in E^*$.
A candidate $e_i$
can have a score at least that of $p$ only if $e_i$ beats $p$ or some candidate $c_i$. However, this is impossible, since $S$ is an independent set and $e_i$ beats $p$ and
every $c_i$ by $n-2$ in the original election. Thus $p$ still remains the unique winner.

$(\Leftarrow:)$ Suppose it  is possible to  switch a  set $S'$ of $k$ voters to a destination party, such that $p$  still remains the winner. First observe that $P'$ cannot be
the destination party, since otherwise, $b_1$ would replace $p$ as the  winner. We distinguish the following two cases:

{\bf{Case 1.}}  $P$ is the destination party. Observe that irrespective of the composition of $S'$, $p$ still beats all the candidates in $A \cup B$ but
none in $C$. Moreover, no candidate in  $A \cup B \cup C$ can increase its score. Since $p$ is the unique winner in the final election, no voters in $S'$
come from the party $P'$, since otherwise, some $e_i$ would beat $p$ and thus prevent $p$ from being the unique winner. Therefore, all voters of $S'$ must be
from  $Z$. More specifically, the vertices corresponding to $S'$ form an independent set, since otherwise, some edge candidate would replace $p$ as the winner.

{\bf{Case 2.}} Some party $P_v$ is the destination party. In this case, no voter of $S'$ is from  $P \cup P'$, since otherwise, since $a_1$ would
prevent $p$ from becoming the winner. Thus, all the votes of $S'$ are from the set $Z$. We claim that the vertices corresponding to $S'$ form an
independent set. For contradiction, assume that this is not true. Then, there must be an edge $e_i$ for which there are two voters in $S'$ preferring $p$ to $e_i$.
Note that $e_i \notin E(v)$. Therefore, $p$ cannot beat $e_i$, leading to that $p$'s score is one less than that of $e_i$ in the original election. Hence, $a_1$
would prevent $p$ from becoming the unique winner, a contradiction.

\end{proof}

\end{document}